\documentclass[journal]{IEEEtran}
\usepackage{amsmath}
\usepackage{amsfonts}
\usepackage{amssymb}
\usepackage{amsthm}
\usepackage{algorithm}
\usepackage{algorithmic}
\usepackage{booktabs}
\usepackage{tabularx}
\usepackage{graphicx}
\usepackage{etoolbox}
\usepackage{balance}
\usepackage{microtype}
\graphicspath{{figures/}}
\usepackage[colorlinks,citecolor=blue,linkcolor=blue]{hyperref}

\newtheorem{proposition}{Proposition}
\newtheorem{corollary}{Corollary}

\newcommand{\R}{\mathbb{R}}
\newcommand{\soft}{\operatorname{soft}}
\newcommand{\diag}{\operatorname{diag}}

\begin{document}

\title{\LARGE  Electromagnetic Twin: From Sparse Measurements to  \\Persistent Wireless Intelligence}

\author{Tuo~Wu,
	Lifeng Mai,
	Jie~Tang, 
	Shihang Lu,
	Kangda Zhi,
	Maged~Elkashlan, 
	M\'erouane~Debbah, \emph{Fellow, IEEE},\\
	Matthew C. Valenti, \emph{Fellow, IEEE},
	Fumiyuki Adachi, \IEEEmembership{Life Fellow,~IEEE},
	 ~Hing Cheung So, \emph{Fellow, IEEE}
\thanks{(\textit{Corresponding author: J. Tang.})}
\thanks{T. Wu and J. Tang are with the School of Electronic and Information Engineering, South China University of Technology, Guangzhou 510640, China (E-mail: $\rm \{wutuo, eejtang\}@scut.edu.cn$). Lifeng Mai is with the Electric Power Research Institute, China Southern Power Grid, Guangzhou 510663, China, and also with the Guangdong Provincial Key Laboratory of Power System Network Security, Guangzhou, Guangdong, China (E-mail: $\rm  mailf@csg.cn$). S. Lu is with the School of Automation and Intelligent Manufacturing (AiM), Southern University of Science and Technology, Shenzhen 518055, China (E-mail: $\rm lush2021@mail.sustech.edu.cn$). K. Zhi is with the School of Electrical Engineering and Computer Science, Technical University of Berlin, 10623 Berlin (E-mail: $\rm k.zhi@tu$-$\rm berlin.de$). M. Elkashlan is with Queen Mary University of London, London, U.K. (E-mail: $\rm maged.elkashlan@qmul.ac.uk$). M. Debbah is with the KU 6G Research Center, Department of Computer and Information Engineering, Khalifa University, Abu Dhabi 127788, UAE (E-mail: $\rm merouane.debbah@ku.ac.ae$). M. C. Valenti is with the Lane Department of Computer Science and Electrical Engineering, West Virginia University, Morgantown, USA (E-mail: $\rm valenti@ieee.org$). F. Adachi is with the International Research Institute of Disaster Science (IRIDeS), Tohoku University, Sendai, Japan (E-mail: $\rm adachi@ecei.tohoku.ac.jp$). H. C. So is with the Department of Electrical Engineering, City University of Hong Kong, Hong Kong (E-mail: $\rm hcso@cityu.edu.hk$). }}

\markboth{}{Wu \MakeLowercase{\textit{et al.}}: Electromagnetic Twin from Sparse Measurements to Persistent Wireless Intelligence}
\maketitle

\begin{abstract}
Radio maps and channel knowledge maps provide reusable propagation knowledge, but a stored map can become locally outdated after persistent changes to doors, partitions, furniture, large equipment, or infrastructure. Motivated by digital-twin state synchronization, we develop an electromagnetic twin that recurrently converts sparse channel measurements into a persistent wireless state, exposes that state to communication queries, and uses its uncertainty to request subsequent measurements. The twin tracks persistent or semi-persistent propagation changes rather than transient human motion or fast fading. Each update is a constrained inverse problem that combines the previous map, a scene graph, and an imperfect physics prior. Measurement consistency with a confidence-calibrated radius and physical gain bounds enforce feasibility, while scene-aware spatial regularization and selective temporal memory preserve propagation boundaries and unchanged regions. Successive convex approximation (SCA), majorization--minimization alternating direction method of multipliers (MM-ADMM), and a low-complexity primal--dual hybrid gradient (LC-PDHG) mode realize the framework at different computational scales, and a local perturbation analysis gives an explicit multi-update tracking recursion. The updated state supports access-point association and codebook beam selection, while change-weighted A-optimal design closes the measurement--update--query loop. Over 500 realizations, the recurrent update remains stable through eight persistent scene events and raises best-beam accuracy from $76.64\%$ to $90.83\%$ as measured-location density grows from $2\%$ to $16\%$.
\end{abstract}

\begin{IEEEkeywords}
Electromagnetic twin, channel knowledge map, radio map, beam management, active channel measurement, SCA, MM-ADMM.
\end{IEEEkeywords}

\section{Introduction}
\IEEEPARstart{W}{ireless} networks rely on location-dependent channel information for coverage planning, access-point (AP) association, handover preparation, link adaptation, beam management, and radio-resource allocation. As wireless systems evolve toward sixth-generation (6G) operation, such environment awareness becomes increasingly important for adaptive networks and spatially reconfigurable technologies~\cite{saad6GVision,wuFAS6G}. A dense survey can provide this information at one time instant, but repeating it after every persistent change to the propagation environment---for example, furniture reconfiguration, installation or removal of large equipment, opening or closing a door for an extended period, temporary partitions, an AP outage, or infrastructure modification---requires substantial sounding overhead. These are changes to the site state over a twin-update interval. We do not propose updating the map whenever a person walks through the site or whenever an instantaneous fading coefficient changes; such short-term effects remain the responsibility of conventional channel tracking. In many practical updates, only part of the propagation environment changes and a large fraction of previously verified channel knowledge remains useful. The central question is therefore not simply how to interpolate another radio map, but how to retain valid wireless knowledge, revise the part contradicted by current evidence, and decide what the network should measure next.

Radio maps provide a direct spatial representation of received power, path loss, or another channel quantity and have long supported spectrum cartography and wireless resource management~\cite{biRadioMap,spectrumCartography}. Channel knowledge maps (CKMs) broaden this representation by organizing location-indexed gains, angles, path states, and beam information for reuse~\cite{zengCKM,ckmtutorial}. Data-driven estimators infer unmeasured locations from sparse samples and environmental images using convolutional, graph, completion, Gaussian-regression, and frequency--spatial models~\cite{radiounet,radiogat,deepCompletion,semiParamGP,spectrumReasoning}; channel charting instead learns a geometry-preserving representation directly from channel features~\cite{channelCharting}. Physics-based propagation models and sensor-aided ray tracing inject complementary geometric structure~\cite{sionna,rgbdRay}, while automatic Wi-Fi radio-map maintenance demonstrates the value of incorporating user feedback over time~\cite{radioMapUpdate}. These approaches are valuable components of a site-specific wireless model. However, a map-completion result by itself does not specify which old entries should remain valid, how imperfect scene information should be corrected by new measurements, or how the completed state should participate in the next network action.

We use the term \emph{electromagnetic twin} for this recurrent and operational layer. The twin is not intended to replace a channel knowledge map (CKM), radio map, or propagation simulator. Instead, the CKM stores accumulated channel measurements and metadata, the propagation engine supplies a physical prior, and the radio map is one queryable view of the current state. The electromagnetic twin connects these components through a state transition: it accepts the stored state and sparse current measurements, produces an updated state, supports a communication query, and requests additional measurements when uncertainty remains. This interpretation is compatible with wireless digital-twin, digital-twin-channel, and network-digital-twin architectures~\cite{realtimetwin,dtchannel,rekpool,dt5g,dt6g,networkDT,diten}, but makes the measurement--update--query feedback mechanism explicit enough to analyze and reproduce. Its recurrent operation is summarized by
\begin{equation}
(\widehat{\mathbf g}_{t-1},\mathcal G_{t-1},\mathcal K_{t-1})
\xrightarrow{\;\mathbf y_t,\,\Delta\mathcal G_t\;}
(\widehat{\mathbf g}_{t},\mathcal G_t,\mathcal K_t),
\label{eq:twin_transition}
\end{equation}
where $t$ is the update epoch, $\widehat{\mathbf g}_t$ is the wireless state, $\mathcal G_t$ is the registered scene graph, $\mathcal K_t$ is the CKM evidence memory, $\mathbf y_t$ contains new channel measurements, and $\Delta\mathcal G_t$ denotes registered scene changes. The distinction is functional rather than competitive: the radio map is an output field, the CKM is the evidence memory, and the electromagnetic twin is the recurrent update and query mechanism.

Joint sensing and map completion further reveal that reconstruction and measurement placement are coupled~\cite{jointRadioMap,sensorSelection}, with uncertainty-guided surveying, Gaussian-process placement, and graph sampling providing established acquisition principles~\cite{spectrumSurvey,gpSensorPlacement,graphSampling}. These foundations motivate the closed measurement--update--query loop developed here.

Throughout the paper, a \emph{channel measurement} means a large-scale channel-gain or reference-signal-power sample obtained after conventional transmission of a known pilot waveform, receiver processing, and local averaging. The pilot sequence itself is not optimized here. This distinction matters because the update operates on the resulting measurements and their locations, not on pilot symbols. Realizing the state transition is difficult for three related reasons. First, current measurements may cover only a small fraction of the site, so a measurement-only reconstruction can be unstable away from observed locations. Second, neither source of prior information is uniformly reliable. The previous map is informative in unchanged regions but stale near a persistent environmental event, whereas a path-loss or ray-tracing map may preserve propagation geometry while misplacing an object or underestimating its attenuation. Third, indiscriminate spatial smoothing can erase wall-induced discontinuities, while indiscriminate temporal regularization can suppress a genuine change. A useful update must therefore determine where measurement evidence, physical structure, and temporal memory should dominate without accessing the unknown true change mask.

To address these issues, we formulate one electromagnetic-twin transition as a constrained inverse problem with wireless-interpretable inputs. Sparse channel measurements anchor the current field. A scene graph describes spatial adjacency and weakens information exchange across known propagation boundaries using graph-signal principles~\cite{graphSP}. A confidence field identifies locations at which the stored state is likely to remain valid. An approximate physics map provides the large-scale propagation pattern and is calibrated online by measurement residuals before optimization. The resulting estimator uses reweighted sparsity penalties~\cite{reweightedl1} for a piecewise-smooth correction to the physics map and a sparse innovation relative to the previous twin, while hard constraints enforce measurement consistency and admissible channel-gain limits. Consequently, the model does not choose globally between a data-driven update and a physics-driven update; it combines them spatially according to the available evidence.

The resulting log-sum regularization is nonconvex, and a practical twin must support different computational scales. For a moderate centralized graph, we derive a direct successive convex approximation (SCA) method based on standard MM/SCA principles~\cite{hunterMM,scaTheory}. Each outer iteration constructs a tight convex majorizer and solves its second-order cone representation to high accuracy. For a large or irregular graph, we derive a majorization--minimization alternating direction method of multipliers (MM-ADMM) implementation~\cite{boydADMM,eckstein}. It uses the same outer weights and solves an equivalent splitting of the same convex approximation through a sparse map update, two shrinkage operations, and two projections. Thus, direct SCA is the transparent accuracy reference, whereas MM-ADMM is the structure-exploiting realization of the same update. We additionally derive LC-PDHG for latency-limited operation. It freezes one SCA surrogate per evidence round and uses only sparse operator products and closed-form proximal steps; unlike the first two solvers, it deliberately trades repeated reweighting accuracy for a fixed iteration budget.

An updated state becomes a twin only when it affects subsequent network operation. We demonstrate three interfaces. First, one scalar gain layer per candidate AP supports association using large-scale channel information; the decision is evaluated through spectral efficiency and outage probability without assuming unavailable instantaneous phase. Second, extending the state to a location--beam gain map supports codebook beam selection and quantifies best-beam accuracy and beam-gain loss. Third, the local curvature of the accepted update defines an information matrix for active sensing. A change-weighted A-optimal criterion selects the next channel-measurement batch by balancing uncertainty reduction over the complete state with additional attention to regions indicated by imperfect change information. The new measurements are assimilated by the same constrained update, yielding wireless knowledge that is repeatedly queried and refined.

The contributions are summarized as follows.
\begin{itemize}\setlength{\itemsep}{0pt}\setlength{\parskip}{0pt}
\item We give a concrete electromagnetic-twin state transition that combines sparse-measurement consistency, physical gain bounds, scene-aware spatial corrections, selective temporal memory, and an evidence-calibrated propagation prior. The formulation states explicitly what is stored, what is updated, and what distinguishes recurrent twinning from one-shot map completion.
\item We derive direct SCA for moderate scenes, an equivalent MM-ADMM splitting for large or irregular scenes, and an LC-PDHG mode for latency-limited updates. The first two solve the same sequence of convex approximations; LC-PDHG freezes one surrogate so that its accuracy--complexity tradeoff is explicit rather than hidden inside an implementation choice.
\item We derive a local tracking bound that separates channel-measurement noise, physics-prior mismatch, previous-twin error, persistent state change, and scene-graph mismatch, and we obtain an error-accumulation recursion over repeated updates.
\item We turn the reconstructed state into communication decisions through multi-AP association and codebook beam selection, then close the sensing loop through change-weighted A-optimal measurement acquisition. The evaluation includes a sequence of persistent environmental changes rather than only one map transition.
\end{itemize}


\section{System and Electromagnetic-Twin Update Model}
\subsection{Sparse Channel Measurements}
\label{sec:sparse_measurements}
Consider a service region discretized into $N$ states. A state can represent a two-dimensional grid cell, a three-dimensional voxel, or a location--frequency--beam tuple. The state can also include angle-domain or reconfigurable intelligent surface (RIS) observations: angle estimates support millimeter-wave three-dimensional positioning~\cite{wuMmwavePositioning}, while high-dimensional RIS information can provide localization features even when some elements are faulty~\cite{wuRISLocalization}. The main reconstruction experiments use a scalar gain state, and the beam-management experiment uses a location--beam state.

Let $\mathbf g_t\in\R^N$ denote the channel quantity requested from the twin at update epoch $t$. For the scalar map, $g_{i,t}$ is a dB-valued large-scale channel gain: it includes distance-dependent path loss, shadowing, penetration loss, and any directional gain retained after local averaging, but not instantaneous small-scale phase. At a selected state, a transmitter sends a known reference pilot and the receiver performs conventional channel estimation, converts the result to gain or reference-signal received power, and averages it over the local sounding window. We call the resulting scalar a \emph{channel measurement}; the pilot waveform itself is not an optimization variable. The network obtains measurements at only $M_t\ll N$ states:
\begin{equation}
\mathbf y_t=\mathbf S_t\mathbf g_t+\mathbf n_t,
\label{eq:measurement}
\end{equation}
where $\mathbf y_t\in\R^{M_t}$ is the channel-measurement vector, $\mathbf S_t\in\{0,1\}^{M_t\times N}$ selects the measured states, and $\mathbf n_t$ collects residual channel-estimation error, receiver uncertainty, unresolved fading, and finite-window averaging error. It is distinct from the thermal-noise power used later to evaluate a downlink rate. We require $\|\mathbf n_t\|_2\leq\delta_t$. Unlike one-shot map construction, $\mathbf S_t$ and $\mathbf y_t$ can change after every measurement round. The reconstructed quantity is also bounded by known engineering limits,
$g_{\min}\leq g_{i,t}\leq g_{\max}$; for a dB-valued gain map, these limits can be selected from the receiver dynamic range and link budget. \eqref{eq:measurement} therefore describes a map-level quantity rather than an instantaneous complex channel coefficient. Repeated reference signals or frequency samples make a Gaussian model for $\mathbf n_t$ a practical approximation through averaging; the confidence-calibrated choice of $\delta_t$ is given after Problem~\eqref{prob:p0}.

The twin stores the previous estimate $\widehat{\mathbf g}_{t-1}$ and receives a physics prior $\mathbf p_t$, which may be generated by a path-loss model, ray tracing, or a low-complexity propagation simulator. We use a diagonal confidence matrix
$\mathbf C_t=\diag(c_{1,t},\ldots,c_{N,t})$, where $c_{i,t}\in[0,1]$ is large when state $i$ is believed to remain unchanged. Thus, $c_{i,t}$ increases the cost of changing a reliable old state, whereas the physics prior is emphasized through $\overline{\mathbf C}_t=\mathbf I-\mathbf C_t$ in potentially changed regions. A uniform $\mathbf C_t=c\mathbf I$ is sufficient when no change detector is available.

\subsection{Scene Graph}
Let $\mathcal G_t=(\mathcal V,\mathcal E_t,\mathbf W_t)$ be a graph whose vertex set $\mathcal V=\{1,\ldots,N\}$ contains the channel states, whose edge set $\mathcal E_t$ describes which states may exchange propagation information, and whose diagonal matrix $\mathbf W_t=\diag(\{w_{e,t}\}_{e\in\mathcal E_t})$ stores nonnegative edge weights~\cite{graphSP}. An oriented edge $e=(i,j)$ gives one row of the incidence matrix $\mathbf B_t\in\R^{|\mathcal E_t|\times N}$ satisfying
$(\mathbf B_t\mathbf g)_e=g_i-g_j$. The weight $w_{e,t}\geq0$ expresses scene knowledge: a high weight encourages consistency in open space, while a low weight permits a discontinuity across a wall, blockage, floor boundary, or frequency band. Therefore, the graph translates available geometry into a regularizer without requiring a site-specific neural network.

For a spatial map, one reproducible construction connects four or eight neighboring cells and sets
\begin{equation}
w_{ij,t}=\exp\!\left(-\frac{\|\mathbf x_i-\mathbf x_j\|_2^2}{2\sigma_x^2}
-\kappa_m m_{ij,t}-\kappa_s\|\mathbf s_{i,t}-\mathbf s_{j,t}\|_2^2\right),
\label{eq:graph_weight}
\end{equation}
where $\mathbf x_i$ is the coordinate of state $i$, $\sigma_x$ controls the spatial neighborhood scale, $m_{ij,t}\in\{0,1\}$ indicates whether a registered material boundary intersects the edge, $\kappa_m\geq0$ controls boundary attenuation in the graph, $\mathbf s_{i,t}$ is an optional scene descriptor, and $\kappa_s\geq0$ controls descriptor similarity. Setting $\kappa_m=\kappa_s=0$ recovers a geometry-only graph. A large $\kappa_m$ weakens smoothing across walls without prescribing the gain discontinuity. For a three-dimensional twin, vertical neighbors are added with an altitude-dependent scale. For multi-band or beam maps, cross-frequency or cross-beam edges are appended to the same incidence matrix.

\subsection{Temporal Confidence and Physics Prior}
The previous state should be trusted selectively rather than uniformly. Let $q_{i,t}^{\rm sc}$ be a persistent-scene-change score obtained from building information, camera, lidar, maintenance records, or object-state tracking, and let $q_{i,t}^{\rm ch}$ be a channel-change score inferred from new channel measurements. At a measured vertex, the latter uses the residual relative to the \emph{previous twin},
\begin{equation}
q_{i,t}^{\rm ch}=\min\!\left(1,
\frac{|y_{m,t}-(\mathbf S_t\widehat{\mathbf g}_{t-1})_m|}{\tau_{\rm ch}}
\right),\quad i=\Omega_t(m),
\label{eq:channel_change}
\end{equation}
where $\Omega_t(m)$ maps measurement index $m$ to its graph vertex and $\tau_{\rm ch}>0$ is the gain-change level that saturates the score. Scores at unmeasured vertices can be obtained by one or two normalized graph-diffusion steps. A combined score and confidence are
\begin{equation}
q_{i,t}=\theta q_{i,t}^{\rm sc}+(1-\theta)q_{i,t}^{\rm ch},
\qquad c_{i,t}=\exp(-\alpha q_{i,t}),
\label{eq:combined_confidence}
\end{equation}
with $\theta\in[0,1]$. Thus, the temporal innovation penalty is strong where neither sensing modality indicates a change, and weak where current evidence disagrees with the stored twin.

Let $\mathbf p_t^{(0)}$ denote the raw physics map produced by a path-loss model, low-order ray tracer, neighboring band, or frozen propagation model. Before optimization, the twin uses the current channel measurements to calibrate this map rather than treating it as exact. In contrast to~\eqref{eq:channel_change}, the following residual is taken with respect to the \emph{physics prior}:
\begin{equation}
\mathbf r_t^{p}=\mathbf y_t-\mathbf S_t\mathbf p_t^{(0)},\qquad
\mathbf p_t=\Pi_{\mathcal G}\!\left(\mathbf p_t^{(0)}+
\mathbf R_t\mathbf r_t^{p}\right),
\label{eq:prior_calibration}
\end{equation}
where $\mathbf R_t\in\R^{N\times M_t}$ is a normalized inverse-distance or graph-diffusion operator whose rows are zero at temporally confident vertices. Thus, only a detected persistent-change region receives an evidence-driven correction. In the experiments, each active row uses its five nearest measurements in that region. This inexpensive step turns an approximate physics map into an online prior and never accesses ground-truth values at unmeasured vertices.

The calibrated $\mathbf p_t$ anchors poorly observed changed regions without overriding current measurements. Multiplication by $\overline{\mathbf C}_t$ in \eqref{prob:p0} implements this division: temporal memory dominates when $c_{i,t}$ is high, while the physics term becomes relevant as confidence in the old state decreases. The optimization regularizes the correction field $\mathbf g-\mathbf p_t$, not the absolute map. Sparse measurement residuals can therefore propagate inside a changed scene component without smoothing away the large-scale propagation pattern already supplied by $\mathbf p_t$.

\subsection{Update and Query Time Scales}
The sounding, updating, and querying operations need not share one time scale. Reference pilots may be transmitted every scheduling interval, but their derived channel measurements can be averaged and accumulated until an update trigger is met. A trigger can be elapsed time, an accumulated measurement count, a residual threshold $\max_{m}|y_{m,t}-(\mathbf S_t\widehat{\mathbf g}_{t-1})_m|>\tau_{\rm ch}$, or a registered persistent scene event. Transient human motion and fast fading are averaged or rejected unless their effect persists over the update window. After updating, the network can query a coverage mask, an outage region, the strongest candidate AP, or the best codebook beam. 

We make three modeling assumptions. First, measurements are registered to graph vertices or known interpolation weights, so $\mathbf S_t$ is available. Second, the selected large-scale channel quantity remains approximately constant during one update computation. Third, at least one of temporal memory, physics prior, or measurement evidence anchors each graph component. These assumptions concern map-level tracking, not instantaneous small-scale channel estimation, and are stress-tested in the numerical study.

The distinction among the three representations is now explicit. The CKM stores the samples $\{\mathbf S_\tau,\mathbf y_\tau\}_{\tau\leq t}$ and other indexed channel knowledge. A radio map is a spatial output such as $\widehat{\mathbf g}_t$. The electromagnetic twin connects stored evidence, scene state, an update operator, and subsequent queries in a recurrent loop.

\subsection{Persistent Twin State and Beam-Space Interface}
To make this loop implementable, we distinguish persistent twin state from solver workspace. After update $t$, the network retains
\begin{equation}
\mathcal T_t=\{\widehat{\mathbf g}_t,\mathcal G_t,\mathbf C_t,
\mathbf p_t,\boldsymbol\Sigma_t,\mathcal K_t\},
\label{eq:persistent_twin}
\end{equation}
where $\mathcal K_t$ contains indexed measurements and their time stamps. The map, graph, confidence, calibrated prior, and uncertainty proxy are exposed to the next sensing or communication query. In contrast, SCA weights and ADMM primal--dual variables are numerical workspace. They can be discarded after convergence, or warm-started if the graph and update interval remain similar. This distinction prevents the electromagnetic twin from being interpreted as only an optimizer: $\mathcal T_t$ is the reusable wireless state.

For a scalar state, a query returns a large-scale gain at a location. This supports coverage prediction, AP association, handover preparation, power allocation, and the measurement acquisition developed in Section~\ref{sec:active}. Let $\widehat g_{i,a,t}$ be the completed dB gain from candidate AP $a\in\{1,\ldots,A\}$ to location $i$. Applying the same update independently to the $A$ gain layers gives
\begin{equation}
\begin{aligned}
\widehat a_{i,t}&=\arg\max_a\widehat g_{i,a,t},\\
I_{i,t}&=\sum_{a\ne\widehat a_{i,t}}P_a h_{i,a,t},\\
 \mathcal R_{i,t}&=\log_2\!\left(1+
\frac{P_{\widehat a_{i,t}}h_{i,\widehat a_{i,t},t}}{N_0+I_{i,t}}\right),
\end{aligned}
\label{eq:association_query}
\end{equation}
where $P_a$ is the transmit power in watts, $h_{i,a,t}=10^{g_{i,a,t}/10}$ is the dimensionless true linear channel gain used only to evaluate the selected AP, $I_{i,t}$ is cochannel interference power in watts, $N_0$ is receiver thermal-noise power over the considered bandwidth, and $\mathcal R_{i,t}$ is spectral efficiency in bit/s/Hz. This $N_0$ is different from the map-measurement error $\mathbf n_t$ in~\eqref{eq:measurement}. \eqref{eq:association_query} requires no instantaneous channel phase and is therefore consistent with the reconstructed scalar state.

The scalar map does not determine an arbitrary complex multiuser precoder. It does, however, support practical codebook beamforming after the state is extended to directional gains. Let $g_{i,b,t}$ be the gain at location $i$ for beam $b\in\{1,\ldots,B_q\}$, collect these entries in $\mathbf G_t^{\rm bm}\in\R^{N\times B_q}$, and stack its columns as $\mathbf g_t^{\rm bm}=\operatorname{vec}(\mathbf G_t^{\rm bm})$. A joint location--beam graph uses
\begin{equation}
\mathbf B_t^{\rm bm}=\begin{bmatrix}
\mathbf I_{B_q}\otimes\mathbf B_t^{\rm loc}\\
\mathbf B^{\rm beam}\otimes\mathbf I_N
\end{bmatrix},
\qquad
\widehat b_{i,t}=\arg\max_b\widehat g_{i,b,t},
\label{eq:beam_extension}
\end{equation}
where $\mathbf B_t^{\rm loc}$ connects neighboring locations and $\mathbf B^{\rm beam}$ connects adjacent or correlated codebook beams. Replacing $(\mathbf g_t,\mathbf B_t)$ in~\eqref{prob:p0} by $(\mathbf g_t^{\rm bm},\mathbf B_t^{\rm bm})$ reconstructs a location--beam gain map from sparse beam-sounding measurements. The query $\widehat b_{i,t}$ then selects a realizable analog beam from the codebook. We evaluate best-beam accuracy and the true gain loss $g_{i,b_{i,t}^{\star},t}-g_{i,\widehat b_{i,t},t}$, where $b_{i,t}^{\star}$ is the oracle beam.  

\section{Constrained State Update and Tracking Stability}
The current channel measurements should not merely appear as a soft penalty. Given the noise radius $\delta_t$, they define the measurement-consistency set
\begin{equation}
\mathcal D_t=\left\{\mathbf g:\|\mathbf S_t\mathbf g-\mathbf y_t\|_2
\leq\delta_t\right\}.
\label{eq:data_set}
\end{equation}
Likewise, the physical range defines
$\mathcal G=[g_{\min},g_{\max}]^N$. We estimate a feasible map while penalizing two different structures. Spatial differences of the prior-correction field should be sparse because the physics map already supplies the large-scale geometry; only its model mismatch must be propagated from sparse evidence. Temporal innovations
$\mathbf g-\widehat{\mathbf g}_{t-1}$ should also be sparse because an update should modify only states supported by new evidence or a detected scene change.

The proposed constrained update is
\begin{equation}
\begin{aligned}
\underset{\mathbf g}{\operatorname{min}}\quad
F_t(\mathbf g)={}&\frac{\nu}{2}\|\overline{\mathbf C}_t
(\mathbf g-\mathbf p_t)\|_2^2\\
&+\lambda\sum_{e}w_{e,t}
\log\!\left(1+\frac{|[\mathbf B_t(\mathbf g-\mathbf p_t)]_e|}{\epsilon_g}\right)\\
&+\eta\sum_{i=1}^{N}c_{i,t}
\log\!\left(1+\frac{|g_i-\widehat g_{i,t-1}|}{\epsilon_d}\right)\\
\operatorname{s.t.}\quad
&\mathbf g\in\mathcal D_t\cap\mathcal G,
\end{aligned}
\label{prob:p0}
\end{equation}
where $\lambda,\eta,\nu>0$ and $\epsilon_g,\epsilon_d>0$. The first term limits unsupported departure from the calibrated prior in changed regions. The second promotes a piecewise-smooth prior-correction field, and the third enforces a minimal-change update in temporally reliable regions. In contrast to an unconstrained regularized fit, every feasible solution of \eqref{prob:p0} matches the new measurements to the prescribed error level and lies inside the admissible gain range. For an edge $e$, the notation $[\mathbf B_t(\mathbf g-\mathbf p_t)]_e$ denotes the $e$th entry of the graph-difference vector.

The radius $\delta_t$ has a direct probabilistic interpretation. If the averaged real-valued measurement error satisfies $\mathbf n_t\sim\mathcal N(\mathbf0,\sigma_n^2\mathbf I_{M_t})$, then $\|\mathbf n_t\|_2^2/\sigma_n^2$ is chi-square distributed with $M_t$ degrees of freedom. For a chosen violation probability $\alpha_t\in(0,1)$, we set
\begin{equation}
\delta_t=\sigma_n\sqrt{F_{\chi^2_{M_t}}^{-1}(1-\alpha_t)},
\qquad
\Pr\{\mathbf g_t^\star\in\mathcal D_t\}=1-\alpha_t,
\label{eq:chi_radius}
\end{equation}
where $F_{\chi^2_{M_t}}^{-1}$ is the chi-square quantile and $\mathbf g_t^\star$ is the true map. The experiments use $1-\alpha_t=0.95$ and estimate $\sigma_n$ from repeated local gain measurements. Gaussianity is an approximation justified by averaging, not a defining property of an electromagnetic twin. For heavy-tailed errors, the same update accepts a radius from an empirical quantile or a distribution-free concentration bound. The solvers are initialized by projecting the calibrated prior onto $\mathcal D_t\cap\mathcal G$; if this intersection is empty, a nonnegative measurement slack is penalized and reported rather than silently declaring an infeasible update.

\subsection{Local Tracking Error Across Repeated Updates}
Convergence to a stationary point does not by itself explain when the accepted twin is accurate. We therefore characterize the local sensitivity of an accepted update. Let $\mathbf g_t^\star$ and $\mathbf B_t^\star$ denote the true state and ideal scene incidence operator, let $\mathbf d_t^\star=\mathbf g_t^\star-\mathbf g_{t-1}^\star$ be the persistent environmental change, and define $e_t=\|\widehat{\mathbf g}_t-\mathbf g_t^\star\|_2$. Around an accepted SCA point, smooth the two absolute-value surrogates and denote their nonnegative diagonal local curvatures by $\boldsymbol\Gamma_{g,t}$ and $\boldsymbol\Gamma_{d,t}$. The local KKT Jacobian has the map-space curvature
\begin{equation}
\mathbf H_t^{\rm loc}=\sigma_n^{-2}\mathbf S_t^T\mathbf S_t
+\nu\overline{\mathbf C}_t^T\overline{\mathbf C}_t
+\lambda\mathbf B_t^T\boldsymbol\Gamma_{g,t}\mathbf B_t
+\eta\boldsymbol\Gamma_{d,t}.
\label{eq:tracking_hessian}
\end{equation}

\begin{proposition}[One-step local tracking bound]
\label{prop:tracking_bound}
Assume the active constraints remain unchanged under a sufficiently small perturbation, the ideal-input local update has solution $\mathbf g_t^\star$, and $\mu_t=\lambda_{\min}(\mathbf H_t^{\rm loc})>0$ on its feasible tangent space. Define
\begin{align}
C_{y,t}&=\frac{\sigma_n^{-2}\|\mathbf S_t\|_2}{\mu_t},\qquad
C_{m,t}=\frac{\eta\|\boldsymbol\Gamma_{d,t}\|_2}{\mu_t},\nonumber\\
C_{p,t}&=\frac{1}{\mu_t}\left\|
\nu\overline{\mathbf C}_t^T\overline{\mathbf C}_t+
\lambda\mathbf B_t^{\star T}\boldsymbol\Gamma_{g,t}\mathbf B_t^\star
\right\|_2,\nonumber\\
\mathbf E_{\mathcal G,t}&=
\mathbf B_t^T\boldsymbol\Gamma_{g,t}\mathbf B_t-
\mathbf B_t^{\star T}\boldsymbol\Gamma_{g,t}\mathbf B_t^\star,\nonumber\\
\varepsilon_{\mathcal G,t}&=
\|\mathbf E_{\mathcal G,t}(\mathbf p_t-\mathbf g_t^\star)\|_2.
\label{eq:tracking_constants}
\end{align}
Then the accepted update obeys
\begin{align}
e_t\leq{}&C_{y,t}\delta_t
+C_{p,t}\|\mathbf p_t-\mathbf g_t^\star\|_2
+C_{m,t}e_{t-1}\nonumber\\
&+C_{m,t}\|\mathbf d_t^\star\|_2
+\frac{\lambda}{\mu_t}\varepsilon_{\mathcal G,t}
+o(\varepsilon_t),
\label{eq:tracking_bound}
\end{align}
where $o(\varepsilon_t)$ contains second-order perturbation terms.
\end{proposition}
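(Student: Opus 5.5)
The plan is a first-order sensitivity argument around the accepted SCA point. With the smoothed absolute values fixed at their current curvatures $\boldsymbol\Gamma_{g,t}$ and $\boldsymbol\Gamma_{d,t}$, the local update is the minimizer of the quadratic model
\begin{align*}
Q_t(\mathbf g)={}&\tfrac{1}{2\sigma_n^2}\|\mathbf S_t\mathbf g-\mathbf y_t\|_2^2
+\tfrac{\nu}{2}\|\overline{\mathbf C}_t(\mathbf g-\mathbf p_t)\|_2^2\\
&+\tfrac{\lambda}{2}(\mathbf g-\mathbf p_t)^T\mathbf B_t^T\boldsymbol\Gamma_{g,t}\mathbf B_t(\mathbf g-\mathbf p_t)
+\tfrac{\eta}{2}(\mathbf g-\widehat{\mathbf g}_{t-1})^T\boldsymbol\Gamma_{d,t}(\mathbf g-\widehat{\mathbf g}_{t-1}),
\end{align*}
restricted to the face fixed by the unchanged active constraints. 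Here the measurement ball is represented through its multiplier, which we normalize to the $\sigma_n^{-2}$ weight appearing in $\mathbf H_t^{\rm loc}$. Its Hessian is exactly $\mathbf H_t^{\rm loc}$ from \eqref{eq:tracking_hessian}. The stationarity condition $\mathbf H_t^{\rm loc}\mathbf g=\mathbf b_t$ is linear in the inputs, with
\[
\mathbf b_t=\sigma_n^{-2}\mathbf S_t^T\mathbf y_t+(\nu\overline{\mathbf C}_t^T\overline{\mathbf C}_t+\lambda\mathbf B_t^T\boldsymbol\Gamma_{g,t}\mathbf B_t)\mathbf p_t+\eta\boldsymbol\Gamma_{d,t}\widehat{\mathbf g}_{t-1}.
\]
The ideal-input update uses $\mathbf y=\mathbf S_t\mathbf g_t^\star$, $\mathbf p=\mathbf g_t^\star$, previous state $\mathbf g_t^\star$, and operator $\mathbf B_t^\star$. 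By assumption it returns $\mathbf g_t^\star$, so $\mathbf H^{\star}\mathbf g_t^\star=\mathbf b_t^\star$.

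Subtracting the two stationarity equations, and projecting onto the feasible tangent space where $\mathbf H_t^{\rm loc}\succeq\mu_t\mathbf I$, gives
\begin{align*}
\mathbf H_t^{\rm loc}(\widehat{\mathbf g}_t-\mathbf g_t^\star)={}&\sigma_n^{-2}\mathbf S_t^T\mathbf n_t
+(\nu\overline{\mathbf C}_t^T\overline{\mathbf C}_t+\lambda\mathbf B_t^{\star T}\boldsymbol\Gamma_{g,t}\mathbf B_t^\star)(\mathbf p_t-\mathbf g_t^\star)\\
&+\lambda\mathbf E_{\mathcal G,t}(\mathbf p_t-\mathbf g_t^\star)
+\eta\boldsymbol\Gamma_{d,t}(\widehat{\mathbf g}_{t-1}-\mathbf g_t^\star)+\mathbf r_t .
\end{align*}
The term $\mathbf r_t$ collects the changes of curvature and multipliers between the perturbed and ideal inputs. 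The key algebraic move is to split $\mathbf B_t^T\boldsymbol\Gamma_{g,t}\mathbf B_t=\mathbf B_t^{\star T}\boldsymbol\Gamma_{g,t}\mathbf B_t^\star+\mathbf E_{\mathcal G,t}$. This isolates the graph-mismatch term $\lambda\varepsilon_{\mathcal G,t}$ from the physics-prior term. We then write $\widehat{\mathbf g}_{t-1}-\mathbf g_t^\star=(\widehat{\mathbf g}_{t-1}-\mathbf g_{t-1}^\star)-\mathbf d_t^\star$, so that this contribution is bounded by $e_{t-1}+\|\mathbf d_t^\star\|_2$.

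Inverting on the tangent space gives $\|(\mathbf H_t^{\rm loc})^{-1}\|_2\le\mu_t^{-1}$. Applying the triangle inequality, together with $\|\mathbf S_t^T\mathbf n_t\|_2\le\|\mathbf S_t\|_2\delta_t$ from $\|\mathbf n_t\|_2\le\delta_t$, produces the terms $C_{y,t}\delta_t$, $C_{p,t}\|\mathbf p_t-\mathbf g_t^\star\|_2$, $C_{m,t}(e_{t-1}+\|\mathbf d_t^\star\|_2)$, and $(\lambda/\mu_t)\varepsilon_{\mathcal G,t}$, which is \eqref{eq:tracking_bound}.

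The main obstacle is justifying that $\mathbf r_t$ is $o(\varepsilon_t)$. Here $\varepsilon_t$ denotes the aggregate perturbation size, namely the sum of the five input discrepancies. The curvatures and multipliers depend on the solution, so the "linear" system is really a linearization of the KKT map. I would invoke the implicit function theorem on the smoothed KKT system. Its Jacobian, restricted to the fixed active set, is nonsingular because $\mu_t>0$, and the strict-complementarity-type assumption that active constraints are unchanged keeps the face fixed. Differentiability of the solution map then shows that curvature and multiplier variations, and the $\mathbf H_t^{\rm loc}$ versus $\mathbf H^\star$ discrepancy beyond $\mathbf E_{\mathcal G,t}$, enter only multiplied by the perturbation. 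They are therefore second order and can be absorbed into $o(\varepsilon_t)$.
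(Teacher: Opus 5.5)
Your proposal is correct and takes the same route as the paper's proof. Like the paper, you subtract the stationarity conditions of the smoothed local surrogate for the actual and ideal inputs, use $\mu_t$ on the common feasible tangent space to control $e_t$, and bound the measurement, prior, graph-mismatch, and memory perturbations separately using $\widehat{\mathbf g}_{t-1}-\mathbf g_t^\star=(\widehat{\mathbf g}_{t-1}-\mathbf g_{t-1}^\star)-\mathbf d_t^\star$. Your explicit split $\mathbf B_t^T\boldsymbol\Gamma_{g,t}\mathbf B_t=\mathbf B_t^{\star T}\boldsymbol\Gamma_{g,t}\mathbf B_t^\star+\mathbf E_{\mathcal G,t}$ and your $\sigma_n^{-2}$ measurement weighting are exactly the conventions the paper's constants rely on without stating them.

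One small correction: multiplier changes on the fixed active set are first order, not second order as you claim. They do not harm the bound, because they lie in the normal space and your tangent-space projection annihilates them.
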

\begin{proof}
Write the KKT inclusions of the smoothed local surrogate for the actual and ideal inputs and subtract them. Strong monotonicity on the common tangent space gives $\mu_t e_t$ on the left. The measurement perturbation is bounded by $\sigma_n^{-2}\|\mathbf S_t^T\mathbf n_t\|_2\leq\sigma_n^{-2}\|\mathbf S_t\|_2\delta_t$. The prior and graph perturbations give the second and fifth terms in~\eqref{eq:tracking_bound}. Finally,
$\|\widehat{\mathbf g}_{t-1}-\mathbf g_t^\star\|_2
\leq e_{t-1}+\|\mathbf d_t^\star\|_2$ gives the two memory terms. Division by $\mu_t$ proves the result to first order.
\end{proof}


\begin{corollary}[Error accumulation]
\label{cor:error_recursion}
If $C_{m,t}\leq\rho<1$ and the sum of the remaining terms in~\eqref{eq:tracking_bound} is at most $a_t$, then after $T$ updates
\begin{equation}
e_T\leq\rho^T e_0+\sum_{t=1}^{T}\rho^{T-t}a_t
\leq\rho^T e_0+\frac{1-\rho^T}{1-\rho}\max_t a_t.
\label{eq:error_recursion}
\end{equation}
\end{corollary}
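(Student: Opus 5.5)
The plan is to unroll the one-step bound of Proposition~\ref{prop:tracking_bound} as a linear scalar recursion and then close it with a geometric-sum estimate. First I absorb all non-memory terms of~\eqref{eq:tracking_bound} into a single forcing term. By hypothesis,
\begin{equation*}
C_{y,t}\delta_t+C_{p,t}\|\mathbf p_t-\mathbf g_t^\star\|_2+C_{m,t}\|\mathbf d_t^\star\|_2+\frac{\lambda}{\mu_t}\varepsilon_{\mathcal G,t}+o(\varepsilon_t)\leq a_t .
\end{equation*}
I read the second-order remainder as included in $a_t$, since the corollary inherits the proposition's first-order validity at each epoch. I also use that each of these terms is nonnegative, and $e_{t-1}\geq0$, so replacing $C_{m,t}$ by the larger constant $\rho$ preserves the inequality. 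This gives
\begin{equation*}
e_t\leq\rho\,e_{t-1}+a_t,\qquad t=1,\ldots,T.
\end{equation*}

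Next I prove by induction on $T$ that $e_T\leq\rho^Te_0+\sum_{t=1}^T\rho^{T-t}a_t$. The base case $T=1$ is the recursion itself. For the inductive step, I substitute the hypothesis for $e_{T-1}$ into $e_T\leq\rho e_{T-1}+a_T$. This step is valid because $\rho\geq0$ keeps the multiplication monotone, and it produces $\rho^Te_0+\sum_{t=1}^{T-1}\rho^{T-t}a_t+a_T$, which is the claimed first inequality of~\eqref{eq:error_recursion}.

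For the second inequality, I bound each $a_t$ by $\max_t a_t$. I then evaluate $\sum_{t=1}^T\rho^{T-t}=\sum_{k=0}^{T-1}\rho^k=(1-\rho^T)/(1-\rho)$, which is finite because $0\leq\rho<1$.

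The only genuine obstacle is justifying that Proposition~\ref{prop:tracking_bound} applies at every epoch $t=1,\ldots,T$. This requires unchanged active constraints, $\mu_t>0$, and small perturbations at each step. I will state this as part of the standing assumptions, namely that the bound holds for each $t$, rather than prove it. I will also remark that the $o(\varepsilon_t)$ terms accumulate with the same geometric weights, so the result is first-order uniformly in $T$ whenever they are uniformly small.
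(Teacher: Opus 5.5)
Your proposal is correct and is exactly the intended argument: the paper gives no separate proof, and the corollary follows by reducing~\eqref{eq:tracking_bound} to $e_t\leq\rho e_{t-1}+a_t$ (using $e_{t-1}\geq0$ and $\rho\geq C_{m,t}\geq0$), unrolling by induction, and summing the geometric series. One small correction: the nonnegativity of the forcing terms that you invoke is not needed, since only $e_{t-1}\geq0$ is required for the replacement of $C_{m,t}$ by $\rho$, so the remainder $o(\varepsilon_t)$ absorbed into $a_t$ need not have a sign.
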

The bound does not claim global recovery from arbitrary sparse samples. It states the required local observability condition and shows why measurement density, prior mismatch, graph incompleteness, and temporal memory must be reported together. If $C_{m,t}$ approaches one, the twin should lower temporal confidence or request more measurements rather than repeatedly propagating an uncertain state.

\paragraph{Two computational regimes of one update problem:}
Problem~\eqref{prob:p0} is the only nonconvex estimation model considered in the main development. We distinguish two regimes according to how its convex approximations are computed.

In the \emph{simple-scene regime}, the graph is moderate in size, stored centrally, and updated infrequently enough that a standard conic solve is practical. SCA is applied directly to~\eqref{prob:p0}; each outer iteration solves a second-order cone program (SOCP) to numerical optimality. This regime gives the most transparent algorithm and requires no application-specific splitting.

In the \emph{complex-scene regime}, the map can be large, irregular, or frequently updated, and a generic SOCP solve becomes the computational bottleneck. The statistical model remains~\eqref{prob:p0}. Only its $k$th SCA approximation is rewritten with graph-difference, temporal-innovation, box-consensus, and measurement-residual variables. MM-ADMM then solves that equivalent convex problem by sparse linear algebra and closed-form proximal steps. Thus, ``simple'' and ``complex'' describe computational regimes of the same electromagnetic-twin update, not different objectives fitted to different experiments.

\section{Simple-Scene Update by Direct SCA}
\subsection{Outer MM/SCA Approximation}
This section gives the direct solution of~\eqref{prob:p0}. It is appropriate when the scene graph and map dimension allow each convex approximation to be handled by a standard centralized solver. The same approximation will be reused in Section~\ref{sec:mmadmm} for the complex-scene implementation.
For $s,s^{(k)}\geq0$, concavity gives
\begin{equation}
\log\left(1+\frac{s}{\epsilon}\right)
\leq \log\left(1+\frac{s^{(k)}}{\epsilon}\right)
+\frac{s-s^{(k)}}{\epsilon+s^{(k)}}.
\label{eq:majorization}
\end{equation}
At outer iteration $k$, define the spatial and temporal weights
\begin{align}
a_{e,t}^{(k)}&=\frac{w_{e,t}}
{\epsilon_g+|[\mathbf B_t(\mathbf g^{(k)}-\mathbf p_t)]_e|},
\label{eq:weight_a}\\
b_{i,t}^{(k)}&=\frac{c_{i,t}}
{\epsilon_d+|g_i^{(k)}-\widehat g_{i,t-1}|}.
\label{eq:weight_b}
\end{align}
Removing constants produces the convex surrogate
\begin{equation}
\begin{aligned}
\underset{\mathbf g}{\min}\quad
&\frac{\nu}{2}\|\overline{\mathbf C}_t(\mathbf g-\mathbf p_t)\|_2^2
+\lambda\sum_e a_{e,t}^{(k)}|[\mathbf B_t(\mathbf g-\mathbf p_t)]_e|\\
&\hspace{7mm}+\eta\sum_i b_{i,t}^{(k)}
|g_i-\widehat g_{i,t-1}|\\
\operatorname{s.t.}\quad
&\|\mathbf S_t\mathbf g-\mathbf y_t\|_2\leq\delta_t,
\quad g_{\min}\mathbf 1\preceq\mathbf g\preceq g_{\max}\mathbf 1.
\end{aligned}
\label{prob:pk}
\end{equation}
This step is an MM update because \eqref{eq:majorization} is a tight upper bound, and it is also an SCA update because the concave penalties are replaced by first-order convex surrogates. There is one outer approximation loop, not separate MM and SCA loops.

\subsection{Exact SOCP Form of Each Approximation}
The convexity of \eqref{prob:pk} does not depend on ADMM. To expose a direct solver interface, introduce nonnegative epigraph variables $\mathbf z\in\R^{|\mathcal E_t|}$ and $\mathbf d\in\R^N$, together with a scalar $h$. The spatial and temporal absolute values are represented by
\begin{equation}
-\mathbf z\preceq\mathbf B_t(\mathbf g-\mathbf p_t)\preceq\mathbf z,\qquad
-\mathbf d\preceq\mathbf g-\widehat{\mathbf g}_{t-1}\preceq\mathbf d.
\label{eq:sca_l1_epigraph}
\end{equation}
The quadratic prior term has the second-order cone epigraph
\begin{equation}
\left\|\begin{bmatrix}
\sqrt{2\nu}\,\overline{\mathbf C}_t(\mathbf g-\mathbf p_t)\\ h-1
\end{bmatrix}\right\|_2\leq h+1,
\label{eq:quadratic_soc}
\end{equation}
which is equivalent to $h\geq\frac{\nu}{2}\|\overline{\mathbf C}_t(\mathbf g-\mathbf p_t)\|_2^2$. Therefore, the $k$th approximation can be solved directly as
\begin{equation}
\begin{aligned}
\underset{\mathbf g,\mathbf z,\mathbf d,h}{\operatorname{min}}\quad
&h+\lambda(\mathbf a_t^{(k)})^T\mathbf z
+\eta(\mathbf b_t^{(k)})^T\mathbf d\\
\operatorname{s.t.}\quad
&\eqref{eq:sca_l1_epigraph},\ \eqref{eq:quadratic_soc},\quad
\|\mathbf S_t\mathbf g-\mathbf y_t\|_2\leq\delta_t,\\
&g_{\min}\mathbf1\preceq\mathbf g\preceq g_{\max}\mathbf1,\quad
\mathbf z\succeq\mathbf0,\ \mathbf d\succeq\mathbf0.
\end{aligned}
\label{prob:sca_socp}
\end{equation}
No penalty parameter or variable splitting is needed in this mathematical form. A standard conic solver can return the global optimum of each approximation. In our MATLAB implementation, a tightly converged primal--dual oracle solves the same convex subproblem so that the experiment does not require CVX or a commercial license.

With exact subproblem solutions and $\gamma_k=1$, tightness of \eqref{eq:majorization} gives $F_t(\mathbf g^{(k+1)})\leq F_t(\mathbf g^{(k)})$. Compactness of $\mathcal G$ supplies accumulation points, and standard SCA/MM arguments imply that every accumulation point is stationary under a constraint qualification~\cite{hunterMM,scaTheory}. A detailed descent and stationarity argument is provided in Appendix~\ref{app:mmadmm_proof}. The direct formulation is consequently the cleanest accuracy reference. Its limitation is computational: a generic interior-point implementation introduces $2N+|\mathcal E_t|+1$ primal variables and two second-order cones. Dense worst-case cost is cubic in the problem dimension, although sparse conic solvers perform substantially better.

\section{Complex-Scene Update by MM-ADMM}
\label{sec:mmadmm}
The complex-scene case does not replace~\eqref{prob:p0} with a new reconstruction criterion. It retains the same outer SCA weights~\eqref{eq:weight_a}--\eqref{eq:weight_b}, but solves each convex approximation~\eqref{prob:pk} by sparse linear algebra and closed-form proximal operators. The outer index $k$ therefore updates the nonconvex majorizer, while the inner index $j$ enforces consensus for one fixed $k$. Keeping these two roles separate is essential: MM determines which weighted convex problem is solved, and ADMM determines how that problem is solved.

\subsection{Four-Way ADMM Splitting}
To separate the two nonsmooth penalties and two constraints, introduce
\begin{equation}
\mathbf z=\mathbf B_t(\mathbf g-\mathbf p_t),\quad
\mathbf d=\mathbf g-\widehat{\mathbf g}_{t-1},\quad
\mathbf q=\mathbf g,\quad
\mathbf r=\mathbf S_t\mathbf g-\mathbf y_t.
\label{eq:splitting}
\end{equation}
The variables $\mathbf z$, $\mathbf d$, $\mathbf q$, and $\mathbf r$ respectively represent spatial differences, temporal innovations, a copy projected onto the physical gain box, and the channel-measurement residual. This is called a \emph{splitting} because the single map variable in~\eqref{prob:pk} is copied into four variables so that each nonsmooth term or constraint has a closed-form update. Although four auxiliary variables are used, they form one grouped block opposite $\mathbf g$; the method is therefore a standard two-block ADMM rather than a potentially divergent naive multi-block ADMM~\cite{boydADMM}.

Let $\mathbf A_k=\diag(a_{1,t}^{(k)},\ldots,a_{|\mathcal E_t|,t}^{(k)})$ and
$\mathbf D_k=\diag(b_{1,t}^{(k)},\ldots,b_{N,t}^{(k)})$. Using indicator functions $\iota_{\mathcal G}$ and $\iota_{\mathcal B_t}$, where $\mathcal B_t=\{\mathbf r:\|\mathbf r\|_2\leq\delta_t\}$, subproblem \eqref{prob:pk} is equivalently
\begin{equation}
\begin{aligned}
\min_{\mathbf g,\mathbf z,\mathbf d,\mathbf q,\mathbf r}\;&
\frac{\nu}{2}\|\overline{\mathbf C}_t(\mathbf g-\mathbf p_t)\|_2^2
+\lambda\|\mathbf A_k\mathbf z\|_1
+\eta\|\mathbf D_k\mathbf d\|_1\\
&+\iota_{\mathcal G}(\mathbf q)+\iota_{\mathcal B_t}(\mathbf r)\\
\text{s.t. }\;&\mathbf B_t(\mathbf g-\mathbf p_t)-\mathbf z=\mathbf0,\quad
\mathbf g-\widehat{\mathbf g}_{t-1}-\mathbf d=\mathbf0,\\
&\mathbf g-\mathbf q=\mathbf0,\quad
\mathbf S_t\mathbf g-\mathbf y_t-\mathbf r=\mathbf0.
\end{aligned}
\label{prob:split}
\end{equation}
This form shows why one auxiliary variable is insufficient: the graph and temporal terms require different thresholds, while the box and noise constraints require different projections. More importantly, the splitting does not alter the map-estimation problem.

\begin{proposition}
For any fixed outer iteration $k$, problems~\eqref{prob:pk} and~\eqref{prob:split} have the same feasible $\mathbf g$, the same objective value at corresponding feasible points, and the same set of optimal maps.
\label{prop:equivalence}
\end{proposition}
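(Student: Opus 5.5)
The equivalence follows from the fact that the four equality constraints in~\eqref{prob:split} determine the auxiliary variables uniquely from $\mathbf g$. We therefore compare the two problems through the elimination map
\begin{equation*}
\phi(\mathbf g)=\bigl(\mathbf g,\ \mathbf B_t(\mathbf g-\mathbf p_t),\ \mathbf g-\widehat{\mathbf g}_{t-1},\ \mathbf g,\ \mathbf S_t\mathbf g-\mathbf y_t\bigr),
\end{equation*}
and show that $\phi$ is a value-preserving bijection between the feasible set of~\eqref{prob:pk} and the feasible set of~\eqref{prob:split}. Here a tuple $(\mathbf g,\mathbf z,\mathbf d,\mathbf q,\mathbf r)$ counts as feasible for~\eqref{prob:split} when it satisfies the linear constraints and its objective is finite, that is, $\mathbf q\in\mathcal G$ and $\mathbf r\in\mathcal B_t$.

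\textbf{Step 1 (feasibility).} If $\mathbf g$ is feasible for~\eqref{prob:pk}, then $\phi(\mathbf g)$ satisfies the four linear constraints by construction. Moreover, $\mathbf q=\mathbf g\in\mathcal G$ and $\mathbf r=\mathbf S_t\mathbf g-\mathbf y_t\in\mathcal B_t$ because $\|\mathbf S_t\mathbf g-\mathbf y_t\|_2\le\delta_t$. Conversely, let a tuple be feasible for~\eqref{prob:split}. The constraints force $(\mathbf z,\mathbf d,\mathbf q,\mathbf r)$ to equal the last four components of $\phi(\mathbf g)$, so the tuple is exactly $\phi(\mathbf g)$. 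Finiteness of the indicators then gives $\mathbf g=\mathbf q\in\mathcal G$ and $\|\mathbf S_t\mathbf g-\mathbf y_t\|_2=\|\mathbf r\|_2\le\delta_t$. Hence the two problems have the same set of feasible maps, and $\phi$ is injective since its first component is $\mathbf g$ itself.

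\textbf{Step 2 (objective values).} Evaluate the objective of~\eqref{prob:split} at $\phi(\mathbf g)$. Both indicator functions vanish. The quadratic term is identical in the two problems. Because $\mathbf A_k$ and $\mathbf D_k$ are diagonal with nonnegative entries $a_{e,t}^{(k)},b_{i,t}^{(k)}\ge0$, we have
\begin{equation*}
\lambda\|\mathbf A_k\mathbf z\|_1=\lambda\sum_e a_{e,t}^{(k)}\bigl|[\mathbf B_t(\mathbf g-\mathbf p_t)]_e\bigr|,\qquad
\eta\|\mathbf D_k\mathbf d\|_1=\eta\sum_i b_{i,t}^{(k)}\bigl|g_i-\widehat g_{i,t-1}\bigr|.
\end{equation*}
These are exactly the weighted penalties in~\eqref{prob:pk}, so the two objectives agree at corresponding feasible points.

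\textbf{Step 3 (optimal sets).} Since $\phi$ is a bijection between feasible sets and preserves objective values, the two optimal values coincide. It follows that $\mathbf g$ minimizes~\eqref{prob:pk} if and only if $\phi(\mathbf g)$ minimizes~\eqref{prob:split}. Projecting the optimal tuples onto their first component therefore yields the same set of optimal maps. Attainment is not needed for this argument, but it also holds: the feasible set is compact, being contained in the box $\mathcal G$ and closed, and the objective is continuous.

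The argument is routine. The one point needing care is the nonnegativity of the weights, which is what lets the weighted $\ell_1$ norms be written as the weighted sums of absolute values; this follows from $w_{e,t}\ge0$, $c_{i,t}\in[0,1]$, and $\epsilon_g,\epsilon_d>0$ in~\eqref{eq:weight_a}--\eqref{eq:weight_b}. We will also state the case in which $\mathcal D_t\cap\mathcal G$ is empty: both problems are then infeasible, with optimal value $+\infty$, so the claim still holds with both sets of optimal maps empty.
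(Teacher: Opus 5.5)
Your proof is correct and follows the same route as the paper. In the forward direction you set the auxiliary variables from $\mathbf g$. In the converse you use finiteness of the indicators together with the equality constraints to recover feasibility for~\eqref{prob:pk}. Your explicit bijection $\phi$, the observation that nonnegative weights give $\|\mathbf A_k\mathbf z\|_1=\sum_e a_{e,t}^{(k)}|z_e|$, and the empty-feasible-set case are valid refinements of details the paper leaves implicit.
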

\begin{proof}
For every feasible $\mathbf g$ of~\eqref{prob:pk}, set $\mathbf z=\mathbf B_t(\mathbf g-\mathbf p_t)$, $\mathbf d=\mathbf g-\widehat{\mathbf g}_{t-1}$, $\mathbf q=\mathbf g$, and $\mathbf r=\mathbf S_t\mathbf g-\mathbf y_t$. The four equalities in~\eqref{prob:split} hold, $\iota_{\mathcal G}(\mathbf q)=0$, and $\iota_{\mathcal B_t}(\mathbf r)=0$. Substitution gives exactly the objective in~\eqref{prob:pk}. Conversely, any finite-objective feasible point of~\eqref{prob:split} satisfies $\mathbf q\in\mathcal G$ and $\mathbf r\in\mathcal B_t$; its equality constraints then imply $\mathbf g\in\mathcal G$ and $\|\mathbf S_t\mathbf g-\mathbf y_t\|_2\leq\delta_t$. Eliminating the auxiliary variables recovers~\eqref{prob:pk}. Hence the two formulations are equivalent in $\mathbf g$.
\end{proof}

Proposition~\ref{prop:equivalence} fixes the role of MM-ADMM. At outer round $k$, the weights are held constant and ADMM converges to the global minimizer of the same convex surrogate used by direct SCA. Only after the inner residual tests are satisfied are the weights recomputed for round $k+1$. Consequently, an accurately solved MM-ADMM sequence inherits the descent and stationarity interpretation of direct SCA; its benefit is scalable computation, not a different optimum created for the complex scene.

With scaled dual variables $\mathbf u_z,\mathbf u_d,\mathbf u_q,$ and $\mathbf u_r$, the scaled augmented Lagrangian, up to dual-only constants, is
\begin{align}
\mathcal L_\rho={}&\frac{\nu}{2}\|\overline{\mathbf C}_t(\mathbf g-\mathbf p_t)\|_2^2
+\lambda\|\mathbf A_k\mathbf z\|_1+\eta\|\mathbf D_k\mathbf d\|_1\nonumber\\
&+\iota_{\mathcal G}(\mathbf q)+\iota_{\mathcal B_t}(\mathbf r)
+\frac{\rho}{2}\|\mathbf B_t(\mathbf g-\mathbf p_t)-\mathbf z+\mathbf u_z\|_2^2\nonumber\\
&+\frac{\rho}{2}\|\mathbf g-\widehat{\mathbf g}_{t-1}-\mathbf d+\mathbf u_d\|_2^2
+\frac{\rho}{2}\|\mathbf g-\mathbf q+\mathbf u_q\|_2^2\nonumber\\
&+\frac{\rho}{2}\|\mathbf S_t\mathbf g-\mathbf y_t-\mathbf r+\mathbf u_r\|_2^2.
\label{eq:aug_lagrangian}
\end{align}
ADMM alternately minimizes \eqref{eq:aug_lagrangian} over the map block and the grouped auxiliary block, followed by dual ascent. Each auxiliary minimization is independent once $\mathbf g$ is fixed.

For penalty $\rho>0$, define
\begin{equation}
\mathbf K_t=\nu\overline{\mathbf C}_t^T\overline{\mathbf C}_t
+\rho(\mathbf B_t^T\mathbf B_t+2\mathbf I+\mathbf S_t^T\mathbf S_t).
\label{eq:K}
\end{equation}
The map update is the sparse linear solve
\begin{align}
\mathbf g^{(j+1)}=\mathbf K_t^{-1}\Big[&\nu\overline{\mathbf C}_t^T
\overline{\mathbf C}_t\mathbf p_t
+\rho\mathbf B_t^T(\mathbf z^{(j)}+\mathbf B_t\mathbf p_t-\mathbf u_z^{(j)})\nonumber\\
&+\rho(\mathbf d^{(j)}+\widehat{\mathbf g}_{t-1}-\mathbf u_d^{(j)})
+\rho(\mathbf q^{(j)}-\mathbf u_q^{(j)})\nonumber\\
&+\rho\mathbf S_t^T(\mathbf r^{(j)}+\mathbf y_t-\mathbf u_r^{(j)})\Big].
\label{eq:gupdate}
\end{align}
\eqref{eq:gupdate} is obtained by setting the gradient of \eqref{eq:aug_lagrangian} with respect to $\mathbf g$ to zero. The term $2\rho\mathbf I$ is contributed by the temporal-consensus and box-consensus equalities. Consequently, $\mathbf K_t$ is symmetric positive definite even when measurements are sparse and the graph Laplacian is singular. No dense inverse is formed. For a moderate fixed graph, a sparse Cholesky factorization is cached. For a large or changing graph, preconditioned conjugate gradient (PCG) uses matrix-vector products with $\mathbf B_t$, $\mathbf B_t^T$, $\mathbf S_t$, and $\mathbf S_t^T$.

A diagonal preconditioner is immediately available:
\begin{equation}
\mathbf M_t=\diag(\mathbf K_t),
\label{eq:jacobi_preconditioner}
\end{equation}
while incomplete Cholesky provides a stronger option when memory permits. The PCG tolerance need not equal the final ADMM tolerance. We use
\begin{equation}
\varepsilon_{\rm CG}^{(j)}=
\min\{\varepsilon_{\max},\chi\|\mathbf r_{\rm pri}^{(j)}\|_2\},
\label{eq:pcg_tolerance}
\end{equation}
so early map solves are inexpensive and become more accurate as consensus is approached.

The four auxiliary updates are separable:
\begin{align}
z_e^{(j+1)}&=\soft\left([\mathbf B_t(\mathbf g^{(j+1)}-\mathbf p_t)
+\mathbf u_z^{(j)}]_e,\frac{\lambda a_{e,t}^{(k)}}{\rho}\right),
\label{eq:zupdate}\\
d_i^{(j+1)}&=\soft\left(g_i^{(j+1)}-\widehat g_{i,t-1}
+u_{d,i}^{(j)},\frac{\eta b_{i,t}^{(k)}}{\rho}\right),
\label{eq:dupdate}\\
\mathbf q^{(j+1)}&=\Pi_{[g_{\min},g_{\max}]^N}
(\mathbf g^{(j+1)}+\mathbf u_q^{(j)}),
\label{eq:qupdate}\\
\mathbf r^{(j+1)}&=\Pi_{\|\cdot\|_2\leq\delta_t}
(\mathbf S_t\mathbf g^{(j+1)}-\mathbf y_t+\mathbf u_r^{(j)}).
\label{eq:rupdate}
\end{align}
Here, $\soft(x,\tau)=\operatorname{sign}(x)\max(|x|-\tau,0)$, the first projection clips each map entry to its physical interval, and the second projection is
$\Pi_{\|\cdot\|_2\leq\delta}(\mathbf v)
=\min(1,\delta/\|\mathbf v\|_2)\mathbf v$.
The thresholds in \eqref{eq:zupdate}--\eqref{eq:dupdate} are edge- and vertex-dependent. A currently large spatial jump receives a small MM weight and is less likely to be removed. Similarly, a large evidence-supported temporal innovation is shrunk less strongly at the next outer iteration. The two projections play a different role: they impose engineering feasibility and are not tuned by $\lambda$ or $\eta$.

The scaled dual updates are
\begin{align}
\mathbf u_z^{(j+1)}&=\mathbf u_z^{(j)}+\mathbf B_t(\mathbf g^{(j+1)}-\mathbf p_t)-\mathbf z^{(j+1)},\nonumber\\
\mathbf u_d^{(j+1)}&=\mathbf u_d^{(j)}+\mathbf g^{(j+1)}
-\widehat{\mathbf g}_{t-1}-\mathbf d^{(j+1)},\nonumber\\
\mathbf u_q^{(j+1)}&=\mathbf u_q^{(j)}+\mathbf g^{(j+1)}-\mathbf q^{(j+1)},\nonumber\\
\mathbf u_r^{(j+1)}&=\mathbf u_r^{(j)}+\mathbf S_t\mathbf g^{(j+1)}
-\mathbf y_t-\mathbf r^{(j+1)}.
\label{eq:dualupdates}
\end{align}

\subsection{Residuals, Adaptive Penalty, and Stopping}
The primal residual stacks the four consensus violations, while a compatible dual residual maps changes of the grouped auxiliary block back to the map space:
\begin{align}
\mathbf r_{\rm pri}^{(j)}={}&[\mathbf B_t(\mathbf g^{(j)}-\mathbf p_t)-\mathbf z^{(j)};
\mathbf g^{(j)}-\widehat{\mathbf g}_{t-1}-\mathbf d^{(j)};\nonumber\\
&\mathbf g^{(j)}-\mathbf q^{(j)};
\mathbf S_t\mathbf g^{(j)}-\mathbf y_t-\mathbf r^{(j)}],
\label{eq:primalres}\\
\mathbf r_{\rm dual}^{(j)}={}&\rho\{\mathbf B_t^T\Delta\mathbf z^{(j)}
+\Delta\mathbf d^{(j)}+\Delta\mathbf q^{(j)}
+\mathbf S_t^T\Delta\mathbf r^{(j)}\}.
\label{eq:dualres}
\end{align}
The inner loop stops when both norms are below absolute-plus-relative tolerances. The absolute part controls numerical consensus in the units of the optimized map; the relative part scales with the current primal and dual magnitudes. Feasibility is also checked directly through
\begin{equation}
v_{\rm data}=\big[\|\mathbf S_t\mathbf g-\mathbf y_t\|_2-\delta_t\big]_+,
\quad
v_{\rm box}=\|\mathbf g-\Pi_{\mathcal G}(\mathbf g)\|_\infty.
\label{eq:violations}
\end{equation}

Residual balancing prevents one consensus direction from converging much more slowly than the other. Every $J_\rho$ iterations, we apply
\begin{equation}
\rho^+=
\begin{cases}
\tau_\rho\rho,&\|\mathbf r_{\rm pri}\|_2>\zeta\|\mathbf r_{\rm dual}\|_2,\\
\rho/\tau_\rho,&\|\mathbf r_{\rm dual}\|_2>\zeta\|\mathbf r_{\rm pri}\|_2,\\
\rho,&\text{otherwise},
\end{cases}
\label{eq:rho_adaptation}
\end{equation}
where $\zeta>1$ and $\tau_\rho>1$. Scaled dual variables are multiplied by $\rho/\rho^+$ to keep the corresponding unscaled multipliers unchanged. Because changing $\rho$ changes $\mathbf K_t$, updates are deliberately infrequent; a new preconditioner is then computed, while all primal and dual iterates are retained.

The outer loop stops when the relative map change and relative objective decrease are both small. To avoid unstable initial reweighting, $\epsilon_g$ and $\epsilon_d$ may follow a continuation schedule from coarse to target values. This high-accuracy mode solves each weighted surrogate to its prescribed tolerance and continues MM until stationarity, which is appropriate for offline planning, periodic network optimization, and benchmark generation.

\subsection{Relation to Direct SCA and Convergence}
\begin{proposition}
Assume $\mathcal D_t\cap\mathcal G$ is nonempty and every convex subproblem~\eqref{prob:split} is solved exactly. Then this algorithm produces the same outer sequence as direct SCA initialized at the same $\mathbf g^{(0)}$, up to nonuniqueness of a surrogate minimizer. Moreover,
$F_t(\mathbf g^{(k+1)})\leq F_t(\mathbf g^{(k)})$, and every accumulation point is stationary for~\eqref{prob:p0} under the standard constraint qualification.
\end{proposition}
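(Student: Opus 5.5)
The plan has three parts: identify MM-ADMM with direct SCA through Proposition~\ref{prop:equivalence}, then obtain descent from the majorization~\eqref{eq:majorization}, and finally get stationarity from a standard MM limit argument.

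\textbf{Identification of the outer sequences.} I argue by induction on $k$. Suppose both methods share the iterate $\mathbf g^{(k)}$. Then the weights \eqref{eq:weight_a}--\eqref{eq:weight_b} coincide, since they depend only on $\mathbf g^{(k)}$, $\mathbf p_t$ and $\widehat{\mathbf g}_{t-1}$. So both methods face the same convex surrogate \eqref{prob:pk}.

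By Proposition~\ref{prop:equivalence}, the exact minimizers of \eqref{prob:split}, projected onto $\mathbf g$, form exactly the optimal-map set of \eqref{prob:pk}, which is also the set direct SCA selects from. Nonemptiness of $\mathcal D_t\cap\mathcal G$ together with compactness of $\mathcal G$ guarantees this set is nonempty. Both methods can therefore choose the same $\mathbf g^{(k+1)}$. This proves the first claim, with ``up to nonuniqueness'' referring to the selection from this set.

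For ``solved exactly'' by ADMM I would invoke standard two-block ADMM convergence~\cite{boydADMM,eckstein}. The grouping in \eqref{eq:splitting} is a genuine two-block structure, the functions involved are closed proper convex, and $\mathbf K_t\succ0$ makes the $\mathbf g$-step well defined. I treat the ADMM limit as the exact subproblem solution, as the hypothesis allows.

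\textbf{Descent.} Write $\widetilde F_t(\cdot\,;\mathbf g^{(k)})$ for the surrogate objective with the constants that were removed added back. Applying \eqref{eq:majorization} edgewise and vertexwise, with $s=|[\mathbf B_t(\mathbf g-\mathbf p_t)]_e|$ and $s=|g_i-\widehat g_{i,t-1}|$, gives
\[
F_t(\mathbf g)\le \widetilde F_t(\mathbf g;\mathbf g^{(k)})\ \text{ on }\mathcal D_t\cap\mathcal G,
\qquad
F_t(\mathbf g^{(k)})=\widetilde F_t(\mathbf g^{(k)};\mathbf g^{(k)}).
\]
The first inequality uses the nonnegativity of $w_{e,t},c_{i,t}$ and of $\lambda,\eta$. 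Since $\mathbf g^{(k)}$ is feasible, it follows that $F_t(\mathbf g^{(k+1)})\le\widetilde F_t(\mathbf g^{(k+1)};\mathbf g^{(k)})\le\widetilde F_t(\mathbf g^{(k)};\mathbf g^{(k)})=F_t(\mathbf g^{(k)})$. Because $F_t$ is bounded below by $0$, the values $F_t(\mathbf g^{(k)})$ converge.

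\textbf{Stationarity (main obstacle).} Let $\mathbf g^{(k_l)}\to\bar{\mathbf g}$, which exists because the feasible set is compact. The difficulty is passing the surrogate optimality through the limit, since the weights change with $k$. The weights are continuous in $\mathbf g^{(k)}$ because $\epsilon_g,\epsilon_d>0$, so they converge to the weights at $\bar{\mathbf g}$, and the surrogates converge uniformly on the compact feasible set.

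For any feasible $\mathbf g$, $\widetilde F_t(\mathbf g^{(k_l+1)};\mathbf g^{(k_l)})\le\widetilde F_t(\mathbf g;\mathbf g^{(k_l)})$. The left side lies between $F_t(\mathbf g^{(k_l+1)})$ and $F_t(\mathbf g^{(k_l)})$, and both of these tend to the common limit $F_t(\bar{\mathbf g})$. Letting $l\to\infty$ then shows that $\bar{\mathbf g}$ minimizes the limiting convex surrogate $\widetilde F_t(\cdot\,;\bar{\mathbf g})$ over $\mathcal D_t\cap\mathcal G$.

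Under a constraint qualification such as Slater's condition for $\mathcal D_t\cap\mathcal G$, the KKT conditions of this convex problem hold at $\bar{\mathbf g}$. The directional derivatives of $\widetilde F_t(\cdot\,;\bar{\mathbf g})$ and $F_t$ agree at $\bar{\mathbf g}$, because the log terms are compositions of smooth concave functions with $|\cdot|$ and the first-order expansion is tight. Hence these KKT conditions coincide with the (Clarke) stationarity conditions of \eqref{prob:p0}, following~\cite{scaTheory,hunterMM}; details would be deferred to Appendix~\ref{app:mmadmm_proof}.
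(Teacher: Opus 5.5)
Your proposal is correct and follows the paper's route. You use Proposition~\ref{prop:equivalence} to identify the ADMM and direct-SCA outer iterates, the tight tangent majorization for the descent chain, compactness for accumulation points, two-block ADMM convergence for the exact inner solves, and the match between the limiting surrogate's KKT conditions and stationarity of~\eqref{prob:p0}; your squeeze argument and the check that the directional derivatives of the limiting surrogate and $F_t$ agree at $\bar{\mathbf g}$ just spell out what the paper delegates to the standard MM/SCA references (descent at $k=0$ relies on the paper's feasible initialization by projection onto $\mathcal D_t\cap\mathcal G$).
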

\begin{proof}
Proposition~\ref{prop:equivalence} shows that exact ADMM minimization of~\eqref{prob:split} returns an optimizer of~\eqref{prob:pk}. Both log-sum majorizers are tight at $\mathbf g^{(k)}$, so minimizing their sum over the unchanged feasible set yields monotonic descent. Compactness of $\mathcal G$ supplies accumulation points, and the standard MM/SCA argument gives stationarity~\cite{hunterMM,scaTheory}. For a fixed $k$, all auxiliary variables form one convex block opposite $\mathbf g$, so standard two-block ADMM converges to a minimizer of~\eqref{prob:pk}~\cite{boydADMM}.
\end{proof}

For a local graph, $|\mathcal E_t|=\mathcal O(N)$. The two shrinkage steps, box projection, noise-ball projection, and dual updates cost $\mathcal O(N+M_t)$. Matrix $\mathbf K_t$ is sparse and positive definite because it contains $2\rho\mathbf I$. With $K_{\rm CG}$ PCG iterations, the dominant map update costs $\mathcal O(K_{\rm CG}(N+M_t))$ per ADMM iteration. A factorization or preconditioner can be reused across MM iterations and across update rounds unless the graph changes substantially.

Exact inner minimization is not required at early outer rounds. Let $Q_t^{(k)}$ denote the convex objective in~\eqref{prob:pk}. We accept an inexact ADMM candidate only when its feasibility violations in~\eqref{eq:violations} are below tolerance and
\begin{equation}
Q_t^{(k)}(\mathbf g^{(k+1)})
\leq Q_t^{(k)}(\mathbf g^{(k)}).
\label{eq:acceptance}
\end{equation}
If this test fails, the inner ADMM loop continues. The rule prevents an inaccurate complex-scene solve from breaking outer descent. A practical implementation uses loose inner tolerances initially and tightens them as the MM iterates stabilize.

\section{Low-Complexity Extension by LC-PDHG}
\label{sec:lcpdhg}
Direct SCA and MM-ADMM repeatedly update the log-sum majorizer. For a latency-limited node, LC-PDHG instead freezes one surrogate per evidence round. From a feasible reference $\mathbf g_t^{\rm ref}$, define
\begin{equation}
\bar a_{e,t}=\frac{w_{e,t}}{\epsilon_g+|[\mathbf B_t(\mathbf g_t^{\rm ref}-\mathbf p_t)]_e|},
\quad
\bar b_{i,t}=\frac{c_{i,t}}{\epsilon_d+|g_{i,t}^{\rm ref}-\widehat g_{i,t-1}|}.
\label{eq:frozen_weights}
\end{equation}
and let $\overline{\mathbf A}_t$ and $\overline{\mathbf D}_t$ contain these frozen weights. The low-complexity problem is
\begin{equation}
\begin{aligned}
\min_{\mathbf g\in\mathcal G}\;&
f_t(\mathbf g)+\lambda\|\overline{\mathbf A}_t\mathbf B_t(\mathbf g-\mathbf p_t)\|_1
+\eta\|\overline{\mathbf D}_t(\mathbf g-\widehat{\mathbf g}_{t-1})\|_1\\
\text{s.t. }\;&\|\mathbf S_t\mathbf g-\mathbf y_t\|_2\leq\delta_t,
\end{aligned}
\label{prob:lc}
\end{equation}
Unlike Direct SCA and MM-ADMM, LC-PDHG does not recompute the log-sum weights after solving this problem. It therefore converges to the minimizer of one convex frozen surrogate, not generally to a stationary point of the original nonconvex problem~\eqref{prob:p0}. Its purpose is deterministic latency and linear-memory operation; similar reconstruction error in a particular scene does not make the three algorithms mathematically identical.
where $f_t(\mathbf g)=\frac{\nu}{2}\|\overline{\mathbf C}_t(\mathbf g-\mathbf p_t)\|_2^2$. Let $\mathcal L_t\mathbf g=[\mathbf B_t\mathbf g;\mathbf g;\mathbf S_t\mathbf g]$ and let $h$ collect the two weighted $\ell_1$ terms and the measurement-ball indicator, including their affine offsets. With $\boldsymbol\xi=[\boldsymbol\alpha;\boldsymbol\beta;\boldsymbol\gamma]$, one PDHG step is
\begin{align}
\boldsymbol\xi^{(\ell+1)}&=\operatorname{prox}_{\sigma h^*}
\!\left(\boldsymbol\xi^{(\ell)}+\sigma\mathcal L_t\bar{\mathbf g}^{(\ell)}\right),\nonumber\\
\mathbf v_g&=\mathbf g^{(\ell)}-\tau\mathcal L_t^T\boldsymbol\xi^{(\ell+1)},
\label{eq:lcpdhg_compact}\\
g_i^{(\ell+1)}&=\operatorname{clip}\!\left(
\frac{v_{g,i}+\tau\nu(1-c_{i,t})^2p_{i,t}}
{1+\tau\nu(1-c_{i,t})^2},g_{\min},g_{\max}\right).\nonumber
\end{align}
The dual proximal map clips the spatial and temporal blocks to $[-\lambda\bar a_{e,t},\lambda\bar a_{e,t}]$ and $[-\eta\bar b_{i,t},\eta\bar b_{i,t}]$, and applies group shrinkage with threshold $\sigma\delta_t$ to the measurement block~\cite{chambollePock}. Extrapolation uses $\bar{\mathbf g}^{(\ell+1)}=\mathbf g^{(\ell+1)}+\vartheta(\mathbf g^{(\ell+1)}-\mathbf g^{(\ell)})$. For $\tau\sigma<1/(2d_{\max}+2)$, the iterations converge to a minimizer of \eqref{prob:lc}; each costs $\mathcal O(N+|\mathcal E_t|+M_t)$ and uses no factorization~\cite{pockPrecondition}. A fixed budget therefore gives deterministic latency, at the stated cost of omitting repeated reweighting.

\section{Twin-Guided Active Channel Measurement}
\label{sec:active}
The preceding algorithms answer how the twin should update once a set of channel measurements is available. We now ask which $K$ unmeasured vertices should be sounded next. At each selected vertex, a conventional reference pilot is transmitted and processed into one new gain measurement as defined in Section~\ref{sec:sparse_measurements}. This acquisition layer does not change Problem~\eqref{prob:p0} or favor one high-accuracy solver. Direct SCA and MM-ADMM return a stationary map of the full model, whereas LC-PDHG returns the minimizer of its frozen surrogate under a fixed iteration budget. The accepted local geometry is then used only to design $\mathbf S_{t+1}$; the active-measurement experiments use MM-ADMM for the state update.

\subsection{Local Information Surrogate}
Let
\begin{equation}
\widehat{\mathbf z}_t=\mathbf B_t(\widehat{\mathbf g}_t-\mathbf p_t),
\qquad
\widehat{\mathbf d}_t=\widehat{\mathbf g}_t-\widehat{\mathbf g}_{t-1},
\label{eq:active_residuals}
\end{equation}
and let $\widehat{\mathbf a}_t$ and $\widehat{\mathbf b}_t$ be the final outer weights from \eqref{eq:weight_a} and \eqref{eq:weight_b}. The absolute values in the accepted weighted surrogate are nondifferentiable. For sensing design only, we replace $|x|$ locally by
$\phi_{\epsilon_u}(x)=\sqrt{x^2+\epsilon_u^2}$, whose curvature is
$\phi_{\epsilon_u}''(x)=\epsilon_u^2(x^2+\epsilon_u^2)^{-3/2}$.
Define diagonal curvature matrices
\begin{align}
[\boldsymbol\Gamma_{g,t}]_{ee}
&=\widehat a_{e,t}\frac{\epsilon_u^2}
{(\widehat z_{e,t}^2+\epsilon_u^2)^{3/2}},
\label{eq:gamma_g}\\
[\boldsymbol\Gamma_{d,t}]_{ii}
&=\widehat b_{i,t}\frac{\epsilon_u^2}
{(\widehat d_{i,t}^2+\epsilon_u^2)^{3/2}}.
\label{eq:gamma_d}
\end{align}
The local information surrogate is
\begin{align}
\mathbf H_t={}&\nu\overline{\mathbf C}_t^T\overline{\mathbf C}_t
+\lambda\mathbf B_t^T\boldsymbol\Gamma_{g,t}\mathbf B_t
+\eta\boldsymbol\Gamma_{d,t}
\nonumber\\
&+\sigma_n^{-2}\mathbf S_t^T\mathbf S_t
+\epsilon_H\mathbf I,
\qquad
\boldsymbol\Sigma_t=\mathbf H_t^{-1},
\label{eq:local_information}
\end{align}
where $\epsilon_H>0$ ensures positive definiteness. 


\subsection{Change-Weighted A-Optimal Design}
Pure A-optimal design reduces average map uncertainty, while a recurrent twin should also revisit regions in which the stored state is less reliable. We assign vertex $i$ the weight
\begin{equation}
\omega_{i,t}=1+\kappa_c(1-c_{i,t}),
\label{eq:active_weight}
\end{equation}
where $\kappa_c\geq0$. A small confidence $c_{i,t}$ increases the value of uncertainty reduction near an observed scene change, while $\kappa_c=0$ recovers unweighted A-optimal sensing. Importantly, $c_{i,t}$ is computed from the incomplete registered scene and measured residuals in \eqref{eq:combined_confidence}; it does not reveal the true change mask.

Let $s_i=1$ if vertex $i$ is included in the next measurement batch and zero otherwise. To avoid confusion with the scene-edge matrix $\mathbf W_t$, define the query-weight matrix $\boldsymbol\Omega_t=\diag(\boldsymbol\omega_t)$. The active-measurement problem is
\begin{subequations}
\label{prob:active}
\begin{align}
\underset{\mathbf s}{\operatorname{minimize}}\quad
&\operatorname{tr}\!\left[\boldsymbol\Omega_t
\left(\mathbf H_t+\sigma_n^{-2}\diag(\mathbf s)\right)^{-1}\right]
\label{prob:active_obj}\\
\operatorname{subject\ to}\quad
&\mathbf1^T\mathbf s=K,\qquad s_i\in\{0,1\},
\label{prob:active_budget}\\
&s_i=0,\qquad i\in\mathcal M_t,
\label{prob:active_unused}
\end{align}
\end{subequations}
where $\mathcal M_t$ is the set of already measured vertices. The trace in \eqref{prob:active_obj} is the weighted sum of local marginal uncertainties after adding $K$ equal-quality measurements. Constraint \eqref{prob:active_budget} makes the sounding overhead explicit, and \eqref{prob:active_unused} prevents repeated selection within one update cycle. Problem~\eqref{prob:active} is a binary sensor-selection problem~\cite{sensorSelection}; exhaustive search requires $\binom{N-M_t}{K}$ evaluations.

\subsection{Greedy Rank-One Solution and Closed Loop}
The effect of one additional measurement admits a closed form. Suppose $\boldsymbol\Sigma$ is the current design covariance within a greedy batch. Selecting candidate $i$ gives
\begin{equation}
\boldsymbol\Sigma_i^+=\boldsymbol\Sigma-
\frac{\boldsymbol\Sigma\mathbf e_i\mathbf e_i^T\boldsymbol\Sigma}
{\sigma_n^2+\Sigma_{ii}},
\label{eq:sherman_morrison}
\end{equation}
by the Sherman--Morrison identity. Its exact reduction of the weighted A-optimal objective is
\begin{equation}
\Delta_i=
\frac{\mathbf e_i^T\boldsymbol\Sigma\boldsymbol\Omega_t
\boldsymbol\Sigma\mathbf e_i}
{\sigma_n^2+\Sigma_{ii}}.
\label{eq:active_score}
\end{equation}
We select the available vertex with the largest $\Delta_i$, update $\boldsymbol\Sigma$ by \eqref{eq:sherman_morrison}, and repeat until $K$ vertices have been chosen. The numerator rewards a candidate correlated with many high-priority uncertain vertices rather than merely one with a large pointwise residual. The denominator correctly accounts for channel-measurement noise and diminishing returns.

\begin{proposition}
For $\boldsymbol\Omega_t\succ\mathbf0$ and $\boldsymbol\Sigma\succ\mathbf0$, every available measurement selected by \eqref{eq:active_score} gives $\Delta_i>0$. Consequently, the weighted A-optimal uncertainty decreases monotonically within each greedy batch.
\end{proposition}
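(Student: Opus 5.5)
The plan is to prove positivity of the score directly, then obtain monotonicity from the exact Sherman--Morrison update \eqref{eq:sherman_morrison} together with an induction over the greedy steps.

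\emph{Positivity of a single score.} Fix an available vertex $i$ and set $\mathbf v=\boldsymbol\Sigma\mathbf e_i$. Because $\boldsymbol\Sigma\succ\mathbf0$ is nonsingular and $\mathbf e_i\neq\mathbf0$, we have $\mathbf v\neq\mathbf0$. The numerator of \eqref{eq:active_score} is $\mathbf v^T\boldsymbol\Omega_t\mathbf v$. It is strictly positive since $\boldsymbol\Omega_t\succ\mathbf0$; indeed \eqref{eq:active_weight} gives $\omega_{i,t}\geq1$. The denominator satisfies $\sigma_n^2+\Sigma_{ii}>0$, because $\Sigma_{ii}=\mathbf e_i^T\boldsymbol\Sigma\mathbf e_i>0$. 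Hence $\Delta_i>0$. In particular the selected maximizer has $\Delta_i>0$.

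\emph{Identifying $\Delta_i$ with the actual decrease.} Apply Sherman--Morrison to $(\boldsymbol\Sigma^{-1}+\sigma_n^{-2}\mathbf e_i\mathbf e_i^T)^{-1}$; this is exactly how each greedy step modifies the information matrix inside \eqref{prob:active_obj}. It yields $\boldsymbol\Sigma_i^+=\boldsymbol\Sigma-\mathbf v\mathbf v^T/(\sigma_n^2+\Sigma_{ii})$. Taking $\operatorname{tr}[\boldsymbol\Omega_t\,\cdot\,]$ and using $\operatorname{tr}(\boldsymbol\Omega_t\mathbf v\mathbf v^T)=\mathbf v^T\boldsymbol\Omega_t\mathbf v$ gives
$\operatorname{tr}(\boldsymbol\Omega_t\boldsymbol\Sigma)-\operatorname{tr}(\boldsymbol\Omega_t\boldsymbol\Sigma_i^+)=\Delta_i.$
So each greedy selection lowers the weighted A-optimal objective by exactly $\Delta_i>0$.

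\emph{Induction over the batch.} The remaining point, and the only real obstacle, is that the hypothesis $\boldsymbol\Sigma\succ\mathbf0$ must persist after each update. We start from $\boldsymbol\Sigma_t=\mathbf H_t^{-1}$ with $\mathbf H_t\succeq\epsilon_H\mathbf I\succ\mathbf0$ by \eqref{eq:local_information}. After a selection, $(\boldsymbol\Sigma_i^+)^{-1}=\boldsymbol\Sigma^{-1}+\sigma_n^{-2}\mathbf e_i\mathbf e_i^T$, which is positive definite as the sum of a positive definite matrix and a positive semidefinite one. Therefore $\boldsymbol\Sigma_i^+\succ\mathbf0$. This argument uses the inverse form rather than the subtracted form, which would only directly show $\boldsymbol\Sigma_i^+\preceq\boldsymbol\Sigma$. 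The hypotheses thus hold at every one of the $K$ greedy steps, and the objective sequence is strictly decreasing within the batch. Constraint \eqref{prob:active_unused} and removal of chosen vertices affect only which indices are available, not this argument.
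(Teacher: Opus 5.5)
Your proof is correct and follows the paper's argument. You show the denominator is positive, write the numerator as $\mathbf v^T\boldsymbol\Omega_t\mathbf v$ with $\mathbf v=\boldsymbol\Sigma\mathbf e_i\neq\mathbf0$ (the paper's $\|\boldsymbol\Omega_t^{1/2}\boldsymbol\Sigma\mathbf e_i\|_2^2$), and take the weighted trace of the Sherman--Morrison update to get the exact reduction $\Delta_i$; your induction via $(\boldsymbol\Sigma_i^+)^{-1}=\boldsymbol\Sigma^{-1}+\sigma_n^{-2}\mathbf e_i\mathbf e_i^T$, starting from $\mathbf H_t\succeq\epsilon_H\mathbf I$, shows $\boldsymbol\Sigma\succ\mathbf0$ persists through the batch, a detail the paper leaves implicit.
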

\begin{proof}
The denominator is positive and the numerator is
$\|\boldsymbol\Omega_t^{1/2}\boldsymbol\Sigma\mathbf e_i\|_2^2>0$.
Taking the weighted trace of \eqref{eq:sherman_morrison} gives the exact reduction $\Delta_i$, proving the claim.
\end{proof}



\section{Numerical Results}
\label{sec:numerical}
\subsection{Setup and Compared Methods}
We implement the algorithms in MATLAB R2024b. The reference system operates at 5.2 GHz over a $24\times18$ m indoor site represented by a $32\times24$ four-neighbor grid. A single access point (AP) at $(3.4,4.5)$ m generates the scalar-map data, and four APs near the site corners are used for association. The large-scale channel gain includes distance-dependent loss, correlated shadowing, and wall penetration, and is normalized from $[-120,-35]$ dB to $[0,1]$ only inside the solver. A persistent equipment reconfiguration changes a localized propagation region. The registered scene and physics prior deliberately use shifted boundaries, missing objects, and attenuations different from the true event; transient pedestrian blockage is not simulated as a twin update.

Unless swept, channel gain is measured at $2\%$ of the vertices with 2 dB standard deviation after local averaging. This sparse setting is intentionally more demanding than routine coverage interpolation and exposes whether the previous state and physics prior are used effectively. Every estimator receives the same measured locations, measured values, and noise realization. Every displayed point is the arithmetic mean over 500 independently seeded channel realizations without post-processing, curve smoothing, or uncertainty shading. MM-ADMM uses five outer rounds and at most 80 inner iterations; direct SCA uses the same five approximations with a tightly converged reference solve. LC-PDHG uses the frozen surrogate in Section~\ref{sec:lcpdhg} and 60 primal--dual iterations. It is reported as the proposed deterministic-latency extension, not as an exact third implementation of the full nonconvex problem.

We set $\nu=0.15$, $\rho=1$, $\epsilon_g=0.035$, $\epsilon_d=0.030$, $\lambda=1.2\times10^{-4}$, and $\eta=1.5\times10^{-4}$. In every realization, $\delta_t$ is the 95\% chi-square radius in~\eqref{eq:chi_radius}; hence its value adapts to both $M_t$ and the assumed measurement-noise variance rather than using a fitted multiplier.

\subsection{Compared Map Estimators}
All baselines use identical channel measurements and the final projection onto
$\mathcal D_t\cap\mathcal G$. For inverse-distance weighting (IDW), let
$\omega_{im}=(\|\mathbf x_i-\mathbf x_m\|_2^2+\epsilon_{\rm I})^{-1}$
over the ten nearest measured vertices; then
$\widetilde g_i^{\rm IDW}=\sum_m\omega_{im}y_{m,t}/\sum_m\omega_{im}$,
where $\epsilon_{\rm I}=2.5\times10^{-3}$. The uncalibrated physics baseline is
$\widehat{\mathbf g}_t^{\rm phy}=\Pi_{\mathcal D_t\cap\mathcal G}(\mathbf p_t^{(0)})$.
The two static scene-aware CKM estimators are a quadratic CKM (Q-CKM) and a total-variation CKM (TV-CKM):
\begin{align}
\widetilde{\mathbf g}_t^{\rm Q}
&=\arg\min_{\mathbf g}\ \frac{\mu_y}{2}
\|\mathbf S_t\mathbf g-\mathbf y_t\|_2^2
+\frac{\epsilon_q}{2}\|\mathbf g\|_2^2\nonumber\\
&\hspace{21mm}+\frac{\mu_q}{2}
\|\mathbf W_t^{1/2}\mathbf B_t\mathbf g\|_2^2,\nonumber\\
\widehat{\mathbf g}_t^{\rm TV}
&\in\arg\min_{\mathbf g\in\mathcal D_t\cap\mathcal G}
\|\mathbf W_t\mathbf B_t\mathbf g\|_1,
\label{eq:static_baselines}
\end{align}
with $(\mu_y,\mu_q,\epsilon_q)=(35,0.65,10^{-5})$; the Q-CKM output is also projected onto the common feasible set. These baselines respectively omit scene/physics/memory, residual calibration, or recurrent memory. Direct SCA and MM-ADMM solve \eqref{prob:p0}, whereas LC-PDHG solves \eqref{prob:lc}; hence all comparisons share evidence and feasibility while isolating the value of recurrent twin information.

The principal recovery metric is the RMSE inside the actually changed region $\mathcal V_t^{\rm ch}$,
\begin{equation}
\operatorname{RMSE}_t^{\rm ch}=
\sqrt{\frac{1}{|\mathcal V_t^{\rm ch}|}
\sum_{i\in\mathcal V_t^{\rm ch}}(\widehat g_{i,t}-g_{i,t}^{\star})^2},
\label{eq:rmse}
\end{equation}
reported in dB. This prevents the much larger unchanged area from hiding update errors. To measure whether an update unnecessarily rewrites stable wireless knowledge, we also use
\begin{equation}
D_t^{\rm un}=|\mathcal V_t^{\rm un}|^{-1}
\sum_{i\in\mathcal V_t^{\rm un}}
|\widehat g_{i,t}-\widehat g_{i,t-1}|.
\label{eq:unchanged_drift}
\end{equation}

\begin{figure*}[t]
\centering
\includegraphics[width=0.76\textwidth]{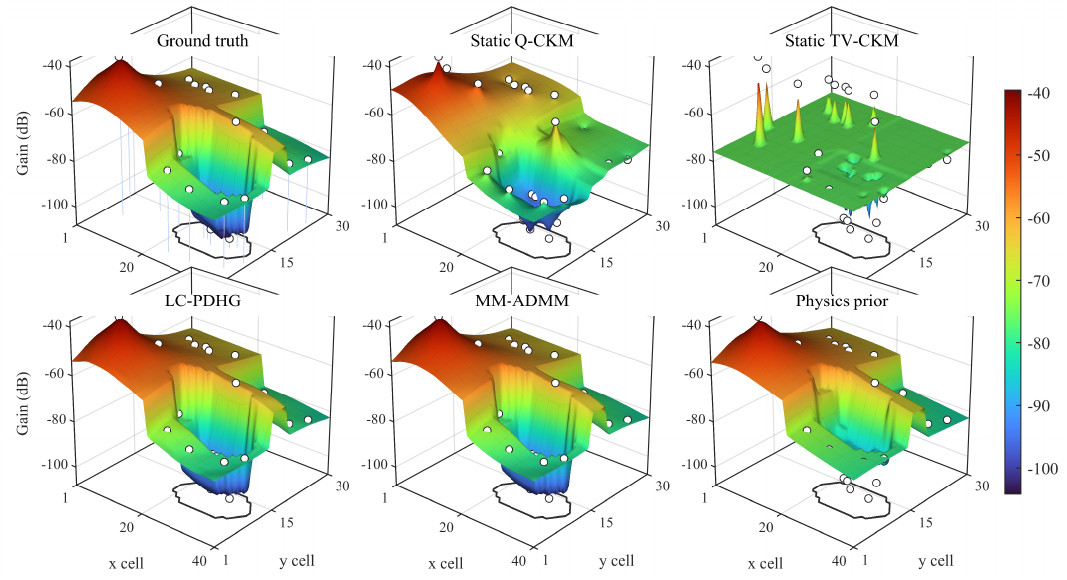}
\caption{Channel-gain surfaces at $2\%$ measurement density.}
\label{fig:maps}
\end{figure*}

\subsection{Map Reconstruction}
Fig.~\ref{fig:maps} shows one representative update on a $40\times30$ grid. Static TV-CKM collapses broad unmeasured areas toward an over-smoothed level and forms localized peaks around isolated measurements, while the uncalibrated physics baseline preserves geometry but underestimates the new equipment-induced attenuation. Static Q-CKM recovers the changed region to 4.875 dB RMSE but rewrites the background because it has no temporal memory. LC-PDHG and MM-ADMM reduce changed-region RMSE to 1.920 and 1.912 dB. Their corresponding full-map RMSE values are 0.692 and 0.695 dB, compared with 4.970 dB for static Q-CKM.

The visual distinction is also temporal. LC-PDHG and MM-ADMM confine the residual calibration to the persistently changed region and retain the old field elsewhere. Static Q-CKM spreads corrections throughout the site, while static TV-CKM collapses large unmeasured areas toward an over-smoothed level. The next experiment quantifies both changed-region recovery and stable-region preservation.

\begin{figure} 
\centering
\includegraphics[width=0.82\columnwidth]{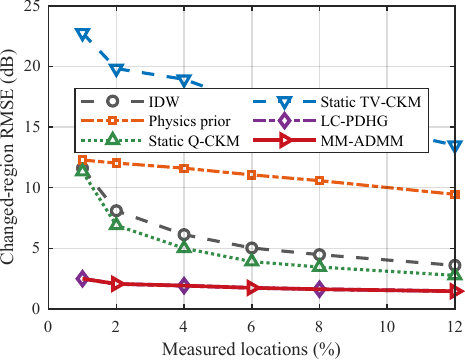}
\caption{Changed-region RMSE versus measurement density (500 realizations).}
\label{fig:pilot}
\end{figure}

\begin{figure} 
\centering
\includegraphics[width=0.82\columnwidth]{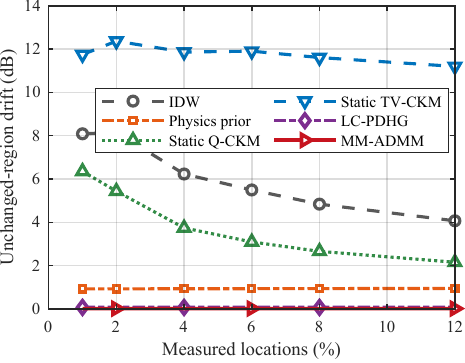}
\caption{Unchanged-region drift versus measurement density.}
\label{fig:drift}
\end{figure}

\subsection{Sparse Evidence and Temporal Stability}
Fig.~\ref{fig:pilot} shows the value of the update loop under sparse evidence. With $1\%$ measured vertices, LC-PDHG and MM-ADMM attain 2.489 and 2.492 dB changed-region RMSE, whereas static Q-CKM, IDW, and the raw physics prior give 11.327, 11.612, and 12.280 dB. At $2\%$, the twin-update values fall to 2.063 and 2.066 dB, compared with 6.894 dB for static Q-CKM. The gain persists as evidence becomes denser: at $12\%$, both displayed twin solvers reach approximately 1.462 dB while static Q-CKM remains at 2.766 dB. The improvement therefore comes from using current measurements to calibrate a reusable propagation state, not merely from changing the numerical optimizer.

Fig.~\ref{fig:drift} supplies the complementary twin-oriented result. At $2\%$ measured vertices, unchanged-region drift is 0.0159 dB for MM-ADMM and 0.0685 dB for LC-PDHG, versus 5.437 dB for static Q-CKM, 12.368 dB for static TV-CKM, and 0.927 dB for the uncalibrated physics map. The drift of both displayed twin solvers remains nearly constant as more measurements arrive. The temporal log-sum term therefore prevents evidence from one local event from unnecessarily rewriting stored wireless knowledge elsewhere.

\begin{figure} 
\centering
\includegraphics[width=0.82\columnwidth]{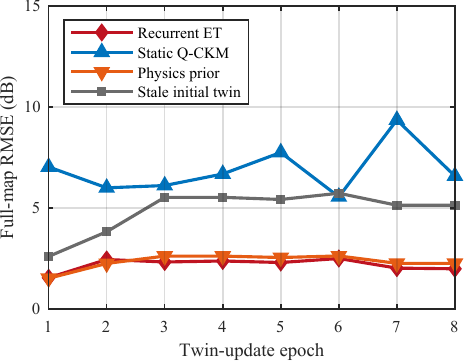}
\caption{Tracking RMSE over eight persistent scene states.}
\label{fig:multiepoch}
\end{figure}

\begin{figure*} 
	\centering
	\includegraphics[width=0.78\textwidth]{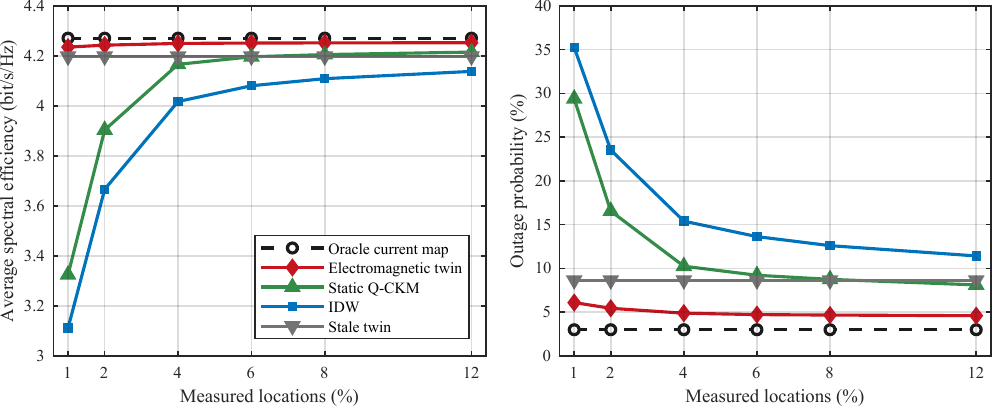}
	\caption{Spectral efficiency and outage versus measurement density.}
	\label{fig:association}
\end{figure*}
\subsection{Persistent Tracking Over Multiple Updates}
Fig.~\ref{fig:multiepoch} tests the state recursion rather than restarting every reconstruction from a fixed map. At every epoch, 4\% of the $20\times14$ grid is measured with 2 dB noise and the recurrent ET accepts its previous output as the next temporal state. The physics engine underestimates attenuation and shifts every registered object boundary; Q-CKM is reconstructed independently from the current measurements; Stale initial twin is never updated. The eight states include two no-change intervals so that a recurrent method is also tested for unnecessary drift.

The recurrent ET remains between 1.54 and 2.49 dB over all 500 sequences and ends at 1.99 dB. In contrast, the static Q-CKM depends strongly on which sparse vertices are observed and ends at 6.58 dB, while the unupdated initial state accumulates 5.13 dB error. The raw physics prior is competitive during the first two changes but cannot use measurements to remove its systematic offset; from the third epoch onward the recurrent ET is consistently more accurate and ends 12.0\% below that prior. This experiment supports the recursion in Corollary~\ref{cor:error_recursion}: accepted updates do not cause unbounded error accumulation, and the two unchanged epochs do not trigger a gratuitous state reset.

\subsection{Using the Twin for Access-Point Association}
Fig.~\ref{fig:association} evaluates the query in \eqref{eq:association_query}, rather than another map-error metric. Four cochannel APs are placed near the corners of the same $32\times24$ grid, and one large-scale gain layer is maintained for each AP. Every realization contains independent correlated shadowing, wall penetration, and an AP-dependent persistent equipment event. All estimators receive the same nested measurement locations and 2 dB measurement noise, and the electromagnetic-twin curve uses MM-ADMM. Each AP transmits at 0 dBm over 20 MHz; the downlink thermal-noise power is computed from $-174$ dBm/Hz and a 7 dB receiver noise figure. Oracle current map selects from the true current layers, whereas Stale twin uses the stored pre-change layers without an update. The other methods select an AP from their completed layers, and the resulting rates are always evaluated on the same true current channels. Outage means $\mathcal R_{i,t}<1$ bit/s/Hz.

At only $2\%$ measured locations, the proposed twin selects the oracle AP at $96.99\%$ of grid locations and achieves 4.243 bit/s/Hz. More importantly, the update reduces outage to $5.43\%$, versus $16.54\%$ for static Q-CKM, $23.52\%$ for IDW, and $8.64\%$ for the stale twin. The latter comparison isolates recurrent updating: temporal memory already provides a strong association state, but retaining it without incorporating persistent environmental change leaves avoidable outages. At the largest measurement density, rate rises to 4.252 bit/s/Hz and outage falls to $4.58\%$. These curves demonstrate a communication use of the scalar twin state without assuming instantaneous channel phase.

\begin{figure} 
\centering
\includegraphics[width=0.82\columnwidth]{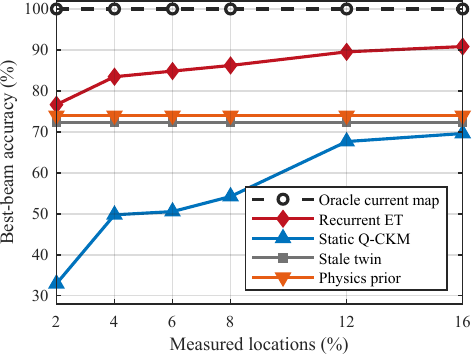}
\caption{Best-beam selection accuracy.}
\label{fig:beam_accuracy}
\end{figure}

\begin{figure} 
\centering
\includegraphics[width=0.82\columnwidth]{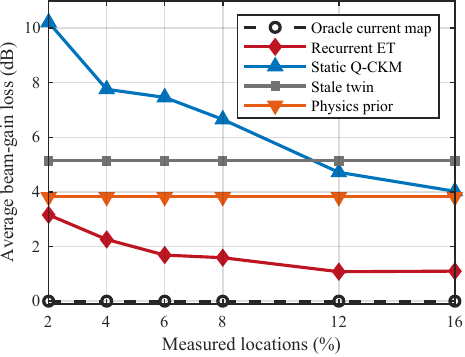}
\caption{Gain loss relative to the best beam.}
\label{fig:beam_loss}
\end{figure}

\subsection{Using the Twin for Codebook Beam Selection}
Figs.~\ref{fig:beam_accuracy} and~\ref{fig:beam_loss} evaluate the location--beam query in~\eqref{eq:beam_extension}. A 5.2 GHz AP with an eight-element uniform linear array serves a $24\times16$ m, $14\times10$ grid using seven analog beams spanning $[-68^\circ,68^\circ]$. At each selected location, all seven reference beams are sounded, so the horizontal axis reports spatial measurement density rather than the fraction of individual location--beam entries. A persistent partition changes both attenuation and dominant departure angle. The physics model underestimates the attenuation and angular displacement, while Stale twin retains the pre-change beam map. Static Q-CKM and the recurrent ET receive identical 2 dB channel measurements; the ET applies the joint location--beam graph in~\eqref{eq:beam_extension} and the MM-ADMM update.

At 2\% measured locations, the recurrent ET selects the true best beam at 76.64\% of grid points and incurs 3.16 dB mean gain loss. At 16\%, its accuracy reaches 90.83\% and its loss falls to 1.10 dB. The static Q-CKM improves from 32.94\% to 69.60\%, but its final 4.02 dB loss remains above the recurrent state. Stale twin and the uncalibrated physics prior remain fixed at 72.31\% and 74.01\% accuracy because neither assimilates the current sounding measurements. Thus, the reconstructed object is not only a scalar coverage image: the same constrained transition supplies a realizable codebook beam decision without claiming recovery of instantaneous complex precoders.

\begin{figure}[t]
\centering
\includegraphics[width=0.82\columnwidth]{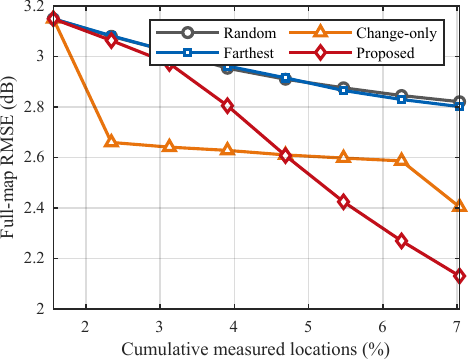}
\caption{Full-map RMSE under active measurement acquisition.}
\label{fig:active_rmse}
\end{figure}

\subsection{Using the Twin to Acquire New Evidence}
Fig.~\ref{fig:active_rmse} tests whether the reconstructed state improves the next measurement decision. Four policies share the same 12 initial measurement locations, channel-measurement noise realization, and MM-ADMM updater. Random selects uniformly without replacement. For the selected set $\mathcal P$, define $d_i(\mathcal P)=\min_{j\in\mathcal P}\|\mathbf x_i-\mathbf x_j\|_2$. The two deterministic baselines select
\begin{equation}
\begin{aligned}
i_{\rm far}&=\arg\max_{i\notin\mathcal P}d_i(\mathcal P),\\
i_{\rm ch}&=\arg\max_{i\notin\mathcal P}
(1-c_{i,t})[0.20+d_i(\mathcal P)].
\end{aligned}
\label{eq:active_baselines}
\end{equation}
Thus, Farthest performs coverage-only exploration, whereas Change-only combines the same imperfect change confidence with spatial spacing but does not propagate map uncertainty. The offset 0.20 prevents a candidate near an existing measurement from receiving an identically zero score when its change confidence is high; distances are normalized by the site diagonal. The proposed policy greedily applies \eqref{eq:active_score} with $\kappa_c=1.5$, $\epsilon_u=0.02$, and $\epsilon_H=2\times10^{-4}$. These values keep the diagonal information matrix well conditioned while giving detected-change vertices at most a $2.5\times$ priority increase. Registered change regions are shifted and $20\%$ incomplete, so neither scene-aware policy receives the true change mask. Six new measurements are added in each of seven batches, giving eight update rounds and 54 final measurements.

\section{Conclusion}
This paper formulated an electromagnetic twin as a concrete measurement--update--query loop. Sparse channel measurements calibrate an imperfect physics prior; a constrained graph update preserves stable wireless knowledge while reconstructing supported persistent changes; and the accepted state drives communication and sensing decisions. In this interpretation, CKM stores reusable evidence, a radio map is one queryable state, and recurrent state synchronization makes that knowledge operational. Direct SCA provides a transparent moderate-scene SOCP reference, MM-ADMM solves an equivalent splitting for larger or irregular graphs, and LC-PDHG supplies a fixed-budget frozen-surrogate mode when latency dominates. The recurrent state remains stable across eight door, partition, furniture, and equipment events.

\appendices
\section{MM-ADMM Descent and Stationarity}
\label{app:mmadmm_proof}
For $\phi_\epsilon(s)=\log(1+s/\epsilon)$, concavity gives the tangent upper bound
\begin{equation}
\phi_\epsilon(s)\leq
\phi_\epsilon(s^{(k)})+\phi_\epsilon'(s^{(k)})(s-s^{(k)}).
\label{eq:tangent_appendix}
\end{equation}
Thus, the surrogate $Q_t^{(k)}$ generated by \eqref{eq:weight_a}--\eqref{eq:weight_b} majorizes $F_t$, touches it at $\mathbf g^{(k)}$, and an exact update obeys
\begin{equation}
F_t(\mathbf g^{(k+1)})
\leq Q_t^{(k)}(\mathbf g^{(k+1)})
\leq Q_t^{(k)}(\mathbf g^{(k)})
=F_t(\mathbf g^{(k)}).
\label{eq:mm_chain}
\end{equation}
Hence $F_t(\mathbf g^{(k)})$ decreases and converges because $\mathcal G$ is compact. For the inner problem, group $\mathbf v=[\mathbf z;\mathbf d;\mathbf q;\mathbf r]$ so that \eqref{prob:split} has the two-block form $\mathbf A\mathbf g+\mathbf E\mathbf v=\mathbf b$, with $\mathbf A=[\mathbf B_t;\mathbf I;\mathbf I;\mathbf S_t]$ and block-diagonal $\mathbf E=-\mathbf I$. The objective is closed, proper, and convex in each grouped block; $\mathbf E$ has full column rank and a saddle point exists whenever \eqref{prob:pk} is feasible. Standard two-block ADMM therefore converges despite the four separable auxiliary variables~\cite{eckstein,boydADMM}. At an accumulation point, the MM weights equal the derivatives of the log penalties, so the limiting surrogate KKT condition is the stationary KKT condition of \eqref{prob:p0}~\cite{hunterMM,scaTheory}. Summable inner errors retain this result; the acceptance test \eqref{eq:acceptance} alone guarantees only descent.

\balance

\end{document}